\documentclass[11pt,a4paper]{article}

\usepackage[T1]{fontenc}
\usepackage{latexsym}
\usepackage{fullpage}
\usepackage{authblk}
\usepackage{amsthm}
\usepackage{amssymb}
\usepackage[utf8]{inputenc}
\usepackage{amsmath}
\usepackage{graphicx}
\usepackage{mathtools}
\usepackage{enumerate}

\newtheorem{theorem}{Theorem}
\newtheorem{lemma}[theorem]{Lemma}
\theoremstyle{plain}
\newtheorem{observation}[theorem]{Observation}

\newcommand{\ignore}[1]{}

\newcommand{\tree}[2]{\ensuremath{#1[#2]}}
\newcommand{\hdeg}{\ensuremath{\mathsf{heavydeg}}}
\newcommand{\copt}[1]{\mathrm{COPT}(#1)}
\newcommand{\pdfs}[1]{\mathrm{PDFS}(#1)}
\usepackage{enumitem}
\newcommand{\cS}{\mathcal{S}}
\newcommand{\cC}{\mathcal{C}}
\newcommand{\card}[1]{\left|#1\right|}
\newcommand{\problemTPE}{\textup{ECTE}}
\newcommand{\routelen}[1]{\ell(#1)}
\newcommand{\cost}[1]{\xi(#1)} 
\newcommand{\algAdversarial}[2][]{\mathrm{ADFS}_{#1}(#2)} 
\newcommand{\noroutes}[1]{\card{#1}}
\newcommand{\allB}{\mathcal{B}}
\newcommand{\Tprime}[1]{T_{#1}}
\newcommand{\classEps}{\mathcal{I}}
\newcommand{\myst}{\hspace{0.1cm}\bigl|\bigr.\hspace{0.1cm}}
\newcommand{\minRoutesStr}{\mathcal{R}}
\newcommand{\DFSroute}{R_{\textup{DFS}}}

\bibliographystyle{plain}

\title{Energy Constrained Depth First Search\footnote{Research partially supported by National Science Centre (Poland) grant number 2015/17/B/ST6/01887.}}

\author[1]{Shantanu Das}
\author[2]{Dariusz Dereniowski}
\author[3]{Przemys\l{}aw~Uzna\'nski}
\affil[1]{LIF, Aix-Marseille University and CNRS, Marseille, France\\
  \texttt{shantanu.das@lif.univ-mrs.fr}}
\affil[2]{Faculty of Electronics, Telecommunications and Informatics, \mbox{Gda{\'n}sk University of~Technology}, Poland\\
  \texttt{deren@eti.pg.edu.pl}}
\affil[3]{Department of Computer Science,
ETH Z\"urich, Switzerland\\
  \texttt{przemyslaw.uznanski@inf.ethz.ch}}

\date{}

\begin{document}

\maketitle

\begin{abstract}
Depth first search is a natural algorithmic technique for constructing a closed route that visits all vertices of a graph. The length of such route equals, in an edge-weighted tree, twice the total weight of all edges of the tree and this is asymptotically optimal over all exploration strategies. This paper considers a variant of such search strategies where the length of each route is bounded by a positive integer $B$ (e.g. due to limited energy resources of the searcher). The objective is to cover all the edges of a tree $T$ using the minimum number of routes, each starting and ending at the root and each being of length at most $B$. 
To this end, we analyze the following natural greedy tree traversal process that is based on decomposing a depth first search traversal into a sequence of limited length routes. 
Given any arbitrary depth first search traversal $R$ of the tree $T$, we cover $R$ with routes $R_1,\ldots,R_l$, each of length at most $B$ such that: $R_i$ starts at the root, reaches directly the farthest point of $R$ visited by $R_{i-1}$, then $R_i$ continues along the path $R$ as far as possible, and finally $R_i$ returns to the root. We call the above algorithm \emph{piecemeal-DFS} and we prove that it achieves the asymptotically minimal number of routes $l$, regardless of the choice of $R$. Our analysis also shows that the total length of the traversal (and thus the traversal time) of piecemeal-DFS is asymptotically minimum over all energy-constrained exploration strategies. 
The fact that $R$ can be chosen arbitrarily means that the exploration strategy can be constructed in an online fashion when the input tree $T$ is not known in advance. Each route $R_i$ can be constructed without any knowledge of the yet unvisited part of $T$. Surprisingly, our results show that depth first search is efficient for energy constrained exploration of trees, even though it is known that the same does not hold for energy constrained exploration of arbitrary graphs.
\end{abstract}

\bigskip\noindent
\textbf{Key Words:} DFS traversal, distributed algorithm, graph exploration, piecemeal exploration, online exploration

\section{Introduction}
Graph-theoretic problems in which one wants to cover the entire graph with one or more routes satisfying certain objective is a well established and long studied topic in many areas of computer science.
Particular problems vary depending on the research area or potential applications, including the study of simple graph traversals like DFS or BFS, for algorithmic purposes, to  complex transportation problems with many variations of traveling salesman problem (TSP), or pursuit-evasion games like the watchman problem, and finally distributed monitoring of networks using mobile agents.

For one possible application of our results, consider a mobile robot that needs to explore an initially unknown tree.
We assume that the tree is edge-weighted and the weight of each edge denotes the length of that edge.
Starting from a single vertex (the root) of the tree, the robot must traverse all edges of and return to its initial location.
Upon visiting a vertex $v$ for the first time, the robot discovers the edges incident to $v$ and can choose one of them to continue the exploration.
Provided that the robot can remember the visited vertices and edges, a simple depth first search (DFS) is an efficient algorithm for exploring the tree, achieving the optimal cost of twice the sum of the lengths of edges in the tree.

Consider a more interesting scenario, when the robot has a limited source of energy (e.g. a battery) which allows it to traverse a path of length at most $B$ (we say such a robot is \emph{energy constrained}).
Naturally, we assume that each vertex of the tree is at distance at most $B/2$ from the root, otherwise the tree cannot be fully explored.
In this case, exploration is possible if the robot can recharge its battery whenever it returns back to the starting location.
Thus, the exploration is a collection of routes of the robot, each of which starts and ends at the root, and has length at most $B$. We are interested in the minimum number of such routes needed (i.e. the number of times the robot has to recharge) to completely explore the tree.

This model of exploration may be of interest for several reasons. One obvious reason is related to the capabilities of the robot; it may have a restricted fuel tank capacity or perhaps a harsh or risky environment enforces a return to its home-base every so often.
A robot that returns periodically to the root can inform about new discoveries --- in this way the knowledge is accumulated gradually at the base-station while the algorithm progresses. This may, for example, reduce the risk of having no data in case of robot failure prior to the end of exploration. From a different point of view, this process may be seen as a \emph{piecemeal learning}, that is, one in which it is possible to have a trade off between exploration and utilization; the two phases representing parts in which learning occurs (exploration) and part in which accumulated knowledge is used (utilization).
Finally, having many restricted-length routes covering a tree instead of a single long route may be potentially applied in scenarios in which one wants to minimize exploration time by strategies using multiple robots. In fact, when the robots are incapable of refueling, we can use several robots to explore the tree, each robot traversing a path of length $B$. In that case, it is important to minimize the number of robots used as well as the total energy cost for exploration.

Note that the \emph{piecemeal exploration} problem has been studied before not just for trees but also for arbitrary connected graphs. However those results were restricted to visiting vertices at depth of at most $B/2(1+\beta)$, for some $\beta > 0$, with the cost of exploration deteriorating sharply as $\beta$ approached zero. In this paper we would like to consider exploration strategies that completely visit all trees up to the maximum possible depth of $B/2$. No such exploration algorithm have been studied for either general graphs or special graphs such as trees. Simple strategies based on depth-first search (breadth first search) perform badly in the case of piecemeal exploration of arbitrary graphs. However as we show in this paper, the piecemeal version of depth-first search performs optimally in trees. This fact is surprising given the fact that for exploration by multiple open routes (routes that do not end at the root) depth-first strategies in trees can have an overhead of $\Omega(\log{n})$~\cite{DDK15}.     

\medskip

\noindent \textbf{Related work:}
There exists extensive literature on graph traversal and exploration, we survey here only the most relevant results on graph exploration by mobile agents.
Exploration of general graphs having $n$ nodes and $m$ edges, by a single agent, has been studied in \cite{PanaiteP99} who gave an algorithm of $m+O(n)$ steps. For exploration by $k$ agents, \cite{FraigniaudGKP06} provides an exploration algorithm taking $O(D+n/\log k)$ steps in trees of $n$ nodes and height $D$. This algorithm turns out to be $O(k/\log k)$ competitive~\cite{HigashikawaKLT14} (where competitiveness is the ratio of the number of steps of an algorithm over the optimal number of steps).
Authors in~\cite{BrassCGX11} give a $O(n/k+D^{k-1})$ time algorithm for tree exploration while \cite{DyniaKHS06} gives an algorithm for sparse trees with competitive ratio $O(D^{1-1/p})$, where $p$ is defined as the tree density. For some lower bound on exploration time, see \cite{DyniaLS07,FraigniaudGKP06,HigashikawaKLT14}.
For other recent results on exploration time see e.g.~\cite{BrassCGX11,DisserMNSS16,DobrevKM12,MegowMS12,OrtolfS14}.
Other than optimizing time, exploration using little memory for the agents has also been studied, see e.g.~\cite{AmbuhlGPRZ11,DisserHK16}.

None of the results mentioned above consider any energy limitation for the agents.
The energy constrained exploration problem was first studied under the name of \emph{Piecemeal Graph Exploration} ~\cite{BetkeRS95}, with the assumption that the route length $B \geq 2(1+\beta)r$, where $r$ is the furthest distance from the starting node to other nodes, and $0<\beta<1$. That paper provided exploration algorithms for a special class of grid graphs with `rectangular obstacles'. 
Awerbuch et al.~\cite{AwerbuchBRS99} showed that, for general graphs, there exists an energy constrained exploration algorithm with a total cost of $O(m+n^{1+o(1)})$. This has been further improved (by an algorithm that is a combination of DFS and BFS) to $O(m+n\log^2 n)$ in~\cite{AwerbuchK98}.
Finally \cite{DuncanKK06} provided an exploration algorithm for general unknown weighted graphs with total cost asymptotic to the sum of edge weights of the graph.
Note that, as mentioned, all the above strategies require the length of each route to be strictly larger than the shortest return path from the starting vertex to the farthest vertex. In other words, these algorithms fail in the extreme cases when the height of the explored tree (or the diameter of the graph) is equal to half of the energy budget, which seem to be the most challenging cases.

The same tree exploration model as we study in this work has been considered in \cite{DDK15,DyniaKS06} for unweighted trees and multiple agents, with one difference: each agent traverses a path of length at most $B$ such that the path starts at the root but may end at any node of the tree (in other words, agents do not have to return to the homebase).
It has been shown in~\cite{DDK15} that if the tree is not known in advance, then there exists an exploration algorithm (that minimizes the number of agents used) with competitive ratio of $O(\log B)$ and this is the best possible. On the other hand, it was shown that by allowing the route lengths to be a constant factor more than $B$, it is possible to explore the tree using the minimum number of agents~\cite{DyniaKS06}.
Distributed algorithms for energy constrained agents has been a subject of recent investigation, see e.g.~\cite{AnayaCCLPV12,AnayaCCLPV16,BartschiC0DGGLM16}.
Authors in~\cite{CzyzowiczDMR16} also consider a model in which agents are allowed to transfer part of their energy to another agent.
There are many studies on variants of the traveling salesman problem, including the $k$-TSP \cite{Arora98,FredericksonHK78} related to the task of finding a bounded length route in a graph. Such results are out of scope for this paper.

\medskip

\noindent \textbf{Our Results and Outline:} 
In this work we analyze a very natural process of partitioning a depth first search traversal $\DFSroute$ of a tree into a sequence $\cS=(R_1,\ldots,R_k)$ of routes where each route $R_i$ has length at most $B$, starts and ends at the root of the tree (see Section~\ref{sec:dfs} for a formal definition).
We prove that the number of routes $k$ is asymptotically optimal (Theorem~\ref{thm:num-agents}), that is, it is within a constant factor of the number of routes in any exploration strategy composed of routes of length at most $B$ that cover the entire tree.
This fact, being intuitively expected for trees (although it does not hold in general graphs~\cite{DuncanKK06}) turns out to be nontrivial.
Our approach is to consider another parameter of an exploration strategy: the \emph{cost} (see Section~\ref{sec:results}) defined as the sum of the lengths of all routes in an exploration strategy.
In order to prove our main result, we argue, in Section~\ref{sec:analysis}, that the cost of $\cS$ is asymptotically optimal (see Theorem~\ref{thm:cost}).
Then, in Section~\ref{sec:results} we argue that the fact that $\cS$ has small cost implies that the number of routes in $\cS$ is expectedly small.

We emphasize that the above claim holds independently of the choice of the initial depth first search traversal $\DFSroute$.
The implications of this fact are twofold.
First, it provides a theoretical insight into such a partitioning of a depth first search traversals into bounded-length segments.
Second, for an exploration algorithm design it means that the routes $R_i$ may be constructed without knowing $\DFSroute$ in advance, or more precisely, the routes may be build in an online fashion based only on the knowledge of the subtree explored to date.
This property makes our algorithm suitable for online exploration of unknown trees by energy constrained mobile agents.

\section{Exploration strategies} \label{sec:preliminaries}

In this work we consider edge-weighted rooted trees $T=(V(T),E(T),\omega\colon E(T)\to\mathbb{R}_+)$, with root $r$.
We define a \emph{route} $R$ as sequence of nodes,
$R=(v_0,v_1,\ldots,v_l)$, where $v_i$ is a vertex of $T$ for each $i\in\{0,\ldots,l\}$, as follows:
\begin{enumerate} [label={\normalfont{(\roman*)}},leftmargin=2\parindent]
 \item $\{v_i,v_{i+1}\}\in E(T)$ for each $i\in\{0,\ldots,l-1\}$,
 \item $v_0=v_l$ is the root $r$ of $T$.
\end{enumerate}

Informally speaking, a route is a sequence of vertices forming a walk in $T$ that starts and ends at the root. We define the \emph{length} of $R$ to be
\[\routelen{R}=\sum_{i=1}^{l}\omega(\{v_{i-1},v_i\}).\]
We say that a vertex $v$ is \emph{visited} (and edge $\{v_{i-1},v_i\}$ is traversed) 
by the route if $v=v_i$ for some $i\in\{1,\ldots,l\}$.
We also say that the subtree of $T$ composed with all vertices visited by $R$ is \emph{covered} by the route.

Given a tree $T$ and an integer $B$, we say that $\cS=(R_1,\ldots,R_k)$ is a $B$-\emph{exploration strategy} for $T$ (or simply \emph{exploration strategy} if $B$ is clear from the context) if for each $i\in\{1 ,\ldots, k\}$, $R_i$ is a route in $T$ of length at most $B$, and each vertex of $T$ is visited by some route in $\cS$.
We write $\noroutes{\cS}$ to refer to the number of routes in $\cS$, $k=\noroutes{\cS}$.

\section{Problem statement and DFS exploration} \label{sec:dfs}

The formulation of the combinatorial problem, to which we refer as \emph{energy constrained tree exploration}, we study in this work is as follows.

\begin{description}
\item[Energy Constrained Tree Exploration problem] ($\problemTPE$)\\[2mm]  
Given a real number $B>1$ and an edge-weighted rooted tree $T$ of height at most $B/2$ what is the minimum integer $k$ such that there exists a $B$-exploration strategy that consists of $k$ routes?
\end{description}

\noindent Our goal is to analyze a particular type of solution to this problem, namely, an exploration strategy that behaves like a depth first search traversal but adopted to the fact that route lengths are bounded by $B$.
Let $\DFSroute=(v_0,v_1,\ldots,v_l)$ be a route in $T$ that covers the tree $T$ and performs a depth first search traversal of $T$.
(Note that $\DFSroute$ is a route and thus we consider a depth first search traversal to have node repetitions.)
For two vertices $u$ and $v$ of $T$, $d(u,v)$ denotes the distance between $u$ and $v$ understood as the sum of weights of the edges of the path connecting these vertices.
We refer by $\pdfs{T}=(R_1,\ldots,R_k)$ (\emph{Piecemeal Depth First Search}) to the following $B$-exploration strategy constructed iteratively for $i:=1,\ldots,k$ (see also Figure~\ref{fig:strategy} for an example):
\begin{enumerate} [label={\normalfont{(\roman*)}},leftmargin=*]
\item let $j_0=0$ i.e. $v_{j_0} = v_0 = r$,
\item\label{it:dfs2} $R_i$ continues DFS exploration from where $R_{i-1}$ \emph{stopped making progress} (from the node $v_{j_{i-1}}$) as long as for currently visited $v_p$: 
\begin{equation} \label{eq:route-len}
d(r,v_{j_{i-1}}) + \routelen{(v_{j_{i-1}},v_{j_{i-1}+1}, \ldots, v_p)} + d(v_p,r) \le B,
\end{equation}
\item furthest $v_p$ (for $p \le l$) that satisfies condition from \ref{it:dfs2} is denoted as $v_{j_i}$, the vertex where $R_i$ stopped making progress,
\item let $R_i= P_{i-1} \circ (v_{j_{i-1}},v_{j_{i-1}+1},\ldots,v_{j_i-1},v_{j_i}) \circ P_i^R$, where $P_{i-1}$ is the path from $r$ to $v_{j_{i-1}}$, and $P_i^R$ is the path from $v_{j_i}$ to $r$. 
\end{enumerate}
Such a strategy $\pdfs{T}$ is called a \emph{DFS $B$-exploration}.
We will say that the part of $R_i$ containing the subsequence $(v_{j_{i-1}},\ldots,v_{j_i})$ \emph{makes progress} on the route $\DFSroute$.
\begin{figure}[htb]
\begin{center}
\includegraphics[scale=0.6]{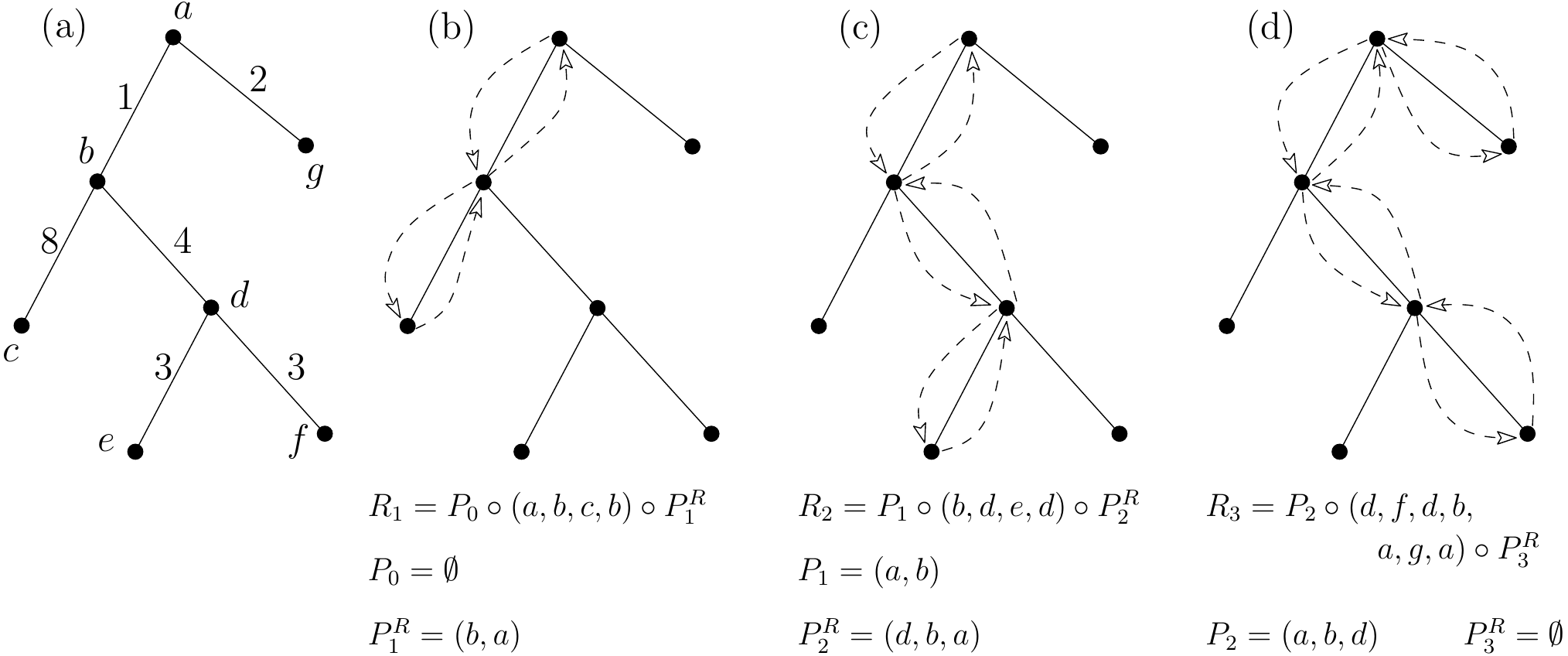}
\caption{A route $\DFSroute$ and the corresponding $\pdfs{T}=(R_1,R_2,R_3)$ with $B=20$:
         (a) a depth first search traversal $\DFSroute=(a,b,c,b,d,e,d,f,d,b,a,g,a)$;
         (b)-(d) routes $R_1,R_2,R_3$ with lengths $18,16$ and $20$, respectively}
\label{fig:strategy}
\end{center}
\end{figure}

We remark that different depth first search traversals $\DFSroute$ may result in different values of $k$ (different number of routes) in the resulting DFS $B$-exploration, although for a particular choice of $\DFSroute$ the corresponding $\pdfs{T}$ is unique.
In the rest of the work we fix the route $\DFSroute$ arbitrarily and thus $\pdfs{T}$ refers to the unique DFS $B$-exploration strategy obtained from $\DFSroute$.

\section{Our results} \label{sec:results}
The following theorem provides the first main result of this work.
\begin{theorem} \label{thm:num-agents}
Let $T$ be a tree and let the longest path from the root to a leaf in $T$ be at most $B/2$.
It holds $\noroutes{\pdfs{T}}\leq 10 \noroutes{\minRoutesStr}$, where $\minRoutesStr$ is a $B$-exploration strategy that consists of the minimum number of routes.
\end{theorem}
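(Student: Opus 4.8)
The plan is to reduce the statement about the number of routes to a statement about total cost, since the paper explicitly says (in the outline) that Theorem~\ref{thm:cost} establishes that $\cost{\pdfs{T}}$ is asymptotically optimal, and that a small cost implies a small number of routes. So I would prove Theorem~\ref{thm:num-agents} as a corollary of Theorem~\ref{thm:cost}, using two elementary bridging inequalities: one relating $\noroutes{\pdfs{T}}$ to $\cost{\pdfs{T}}$ from above, and one relating $\noroutes{\minRoutesStr}$ to $\cost{\minRoutesStr}$ (equivalently, the optimal cost) from below. The target constant $10$ strongly suggests the proof is a chain of such inequalities whose multiplicative factors combine to $10$.

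The first bridging step is an \emph{upper bound} on the number of PDFS routes in terms of their total cost. The key structural fact is that consecutive routes of $\pdfs{T}$ cannot both be ``cheap'': if route $R_i$ stopped making progress at $v_{j_i}$ because of the budget constraint~\eqref{eq:route-len}, then $R_i$ together with $R_{i+1}$ must have spent a substantial fraction of $2B$ (roughly, because the next step along $\DFSroute$ would have violated the length bound, so $R_i$ used close to its full budget, or the ``dead weight'' $d(r,v_{j_{i-1}})+d(v_{j_i},r)$ of reaching and returning is what ate the budget). I would make this precise by showing $\routelen{R_i}+\routelen{R_{i+1}} \ge c\,B$ for some constant $c$ for all but possibly the last route, which gives $\noroutes{\pdfs{T}} \le \frac{2}{cB}\cost{\pdfs{T}} + O(1)$; alternatively one shows each individual $\routelen{R_i} \ge cB$ except possibly the last. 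The \emph{lower bound} on $\noroutes{\minRoutesStr}$ is immediate: every route has length at most $B$, so $\noroutes{\minRoutesStr} \ge \cost{\minRoutesStr}/B \ge (\textrm{OPT cost})/B$.

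Chaining these, $\noroutes{\pdfs{T}} \lesssim \cost{\pdfs{T}}/B \lesssim \cost{\minRoutesStr}/B \lesssim \noroutes{\minRoutesStr}$, where the middle inequality is exactly Theorem~\ref{thm:cost}. I expect the main obstacle to be the first bridging inequality, namely arguing that PDFS routes are individually (or pairwise) expensive. The subtlety is that a route $R_i$ might stop making progress not because it exhausted its budget on useful traversal, but because the very next edge of $\DFSroute$ led deep into the tree so that the return path would overshoot $B$; in that case $R_i$'s length could be noticeably below $B$. The standard way to handle this is to argue that when $R_i$ is short, it is precisely because $d(r,v_{j_i})$ is large (the frontier is deep), so the next route $R_{i+1}$ must pay at least $2\,d(r,v_{j_i})$ just to return to the frontier, and one of the two consecutive routes is therefore long; this is what forces the pairing argument and is the delicate accounting I would need to carry out carefully to land the constant.

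One caveat I would flag: the clean ``every route has length $\ge cB$'' claim can fail, so the robust route is the pairwise bound $\routelen{R_i}+\routelen{R_{i+1}}\ge c B$, and I would verify it covers the boundary cases (first and last routes, and the route where progress stops exactly at a leaf) before assembling the final constant.
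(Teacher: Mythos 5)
Your overall architecture is the same as the paper's: the lower bound $\noroutes{\minRoutesStr}\ge\lceil\cost{\copt{T}}/B\rceil$ is exactly the paper's Observation~\ref{obs:opt}, and the middle link is Theorem~\ref{thm:cost}. Your upper-bound bridge is also essentially provable as you state it: if $R_i$ stops making progress at $v_{j_i}$ because descending the next edge $e$ of $\DFSroute$ would overshoot, then $\routelen{R_i}+2\omega(e)>B$, while $R_{i+1}$ must pay $d(r,v_{j_i})+\omega(e)$ to reach the lower endpoint of $e$ and at least as much to get home, so $\routelen{R_{i+1}}\ge 2d(r,v_{j_i})+2\omega(e)>B-\routelen{R_i}$; hence $\routelen{R_i}+\routelen{R_{i+1}}>B$ for every $i<k$ (your worry about stopping "just before a deep descent" is resolved exactly this way, and note a route never stops before an ascent, since ascending does not increase the quantity in~\eqref{eq:route-len}). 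The caveat you flagged is real: the per-route bound $\routelen{R_i}\ge cB$ genuinely fails (a first route that only explores a short branch near the root before the DFS plunges to depth $B/2$ can have length arbitrarily close to $0$), so the pairwise bound is the best you get on this path.

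The genuine gap is quantitative: the pairwise bound only yields $\cost{\pdfs{T}}>\lfloor k/2\rfloor\, B$, hence $\noroutes{\pdfs{T}}\le 2\cost{\pdfs{T}}/B+1\le 20\noroutes{\minRoutesStr}+1$, which proves asymptotic optimality but not the stated constant $10$. The paper avoids this factor-$2$ loss with a different device: it subdivides each blocking edge $\{v_{j_i},v_{j_i+1}\}$ at the point $x_i$ with $\omega(\{v_{j_i},x_i\})=(B-\routelen{R_i})/2$, producing a tree $T'$ in which $\pdfs{T'}$ visits the old vertices in the same order but every route except the last has length \emph{exactly} $B$, so that $\noroutes{\pdfs{T'}}=\lceil\cost{\pdfs{T'}}/B\rceil$ with no slack. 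Since subdividing an edge changes neither $\noroutes{\pdfs{\cdot}}$ nor $\cost{\copt{\cdot}}$, applying Theorem~\ref{thm:cost} to $T'$ and then Observation~\ref{obs:opt} gives the constant $10$ directly. If you want to salvage your route, you would need to replace the pairwise accounting by this "charge each route up to a full budget of $B$" normalization (or an equivalent amortization); as written, your argument proves the theorem only with $10$ replaced by roughly $20$.
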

The theorem refers to the number of routes in an exploration strategy.
However, in order to analyze the behavior of $\pdfs{T}$, we will work with another parameter on which we will focus in the entire analysis in the subsequent section.
For any $B$-exploration strategy $\cS=(R_1,\ldots,R_k)$ of $T$ we will denote by $\cost{\cS}$ the \emph{cost} of $\cS$ defined as 
\[\cost{\cS} = \sum_{i=1}^k \routelen{R_i}.\]
We denote by $\copt{T}$ an optimal solution with respect to the cost, that is, a $B$-exploration strategy whose cost is minimum over all $B$-exploration strategies.
This strategy will serve as a reference point to prove asymptotic optimality (in terms of the number of routes) of the DFS exploration.
More precisely, we will prove the following theorem.
\begin{theorem} \label{thm:cost}
Let $T$ be a tree and let $B/2$ be greater than or equal to the longest path from the root to a leaf in $T$.
It holds $\cost{\pdfs{T}}\leq 10\cdot\cost{\copt{T}}$.
\end{theorem}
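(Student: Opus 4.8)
The plan is to work with an exact cost decomposition of $\pdfs{T}$ and then charge its two parts separately to $\copt{T}$. First I would observe that the progress segments $(v_{j_{i-1}},\ldots,v_{j_i})$ of the routes $R_1,\ldots,R_k$ partition the edges of $\DFSroute$, so their lengths sum to $\routelen{\DFSroute}=2W$, where $W=\sum_{e\in E(T)}\omega(e)$. The remaining part of each $R_i$ is the descent $P_{i-1}$ to $v_{j_{i-1}}$ and the ascent $P_i^R$ from $v_{j_i}$, of lengths $h_{i-1}$ and $h_i$ with $h_i:=d(r,v_{j_i})$ (and $h_0=h_k=0$). Summing yields the identity $\cost{\pdfs{T}}=2W+2\sum_{i=1}^{k-1}h_i$. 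The easy lower bound $\cost{\copt{T}}\ge 2W$ (every edge is traversed an even, hence at least twice, number of times by any route entering its subtree) already matches the first term with room to spare; the entire difficulty is to bound the overhead $2\sum_{i<k}h_i$, which, as simple deep-and-narrow examples show, can far exceed $W$. Hence a depth-sensitive lower bound on $\copt{T}$ is unavoidable.

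Next I would extract the structure of the stopping vertices. Because appending an upward DFS step never increases the return distance, the quantity in~\eqref{eq:route-len} is nondecreasing along $\DFSroute$, so each $R_i$ ($i<k$) stops exactly before a downward edge $e_i':=\{v_{j_i},v_{j_i+1}\}$; moreover these $e_i'$ are pairwise distinct, since each edge is descended only once in $\DFSroute$. Infeasibility of the next vertex then gives $\routelen{R_i}+2\omega(e_i')>B$, i.e. every non-final route is nearly full. I would split the indices $i<k$ according to whether $\omega(e_i')>B/4$ (heavy) or not (light). For heavy stops, $h_i\le B/2<2\omega(e_i')$, and distinctness gives $\sum_{\text{heavy}}h_i\le 2\sum_{\text{heavy}}\omega(e_i')\le 2W\le\cost{\copt{T}}$, so these contribute only a constant factor. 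For light stops, $\routelen{R_i}>B/2$, and it remains to bound $\sum_{\text{light}}h_i$.

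To control the light (depth-driven) part I would pass to a clean reformulation of the optimum. A route covers exactly a connected subtree $F\ni r$, costs exactly $2\omega(F)$ for an Euler tour, and has length $\le B$ iff $\omega(F)\le B/2$; hence $\cost{\copt{T}}=2\cdot\mathrm{minCover}$, where $\mathrm{minCover}$ is the least total weight of a cover of $E(T)$ by root-containing subtrees of weight $\le B/2$. Writing each weight as an integral over depth levels, for any such cover one has $\sum_j\omega(F_j)=\int_0^{B/2}\sum_j c_j(x)\,dx\ge\int_0^{B/2}\max(c(x),N^*(x))\,dx$, where $c(x)$ is the number of edges crossing depth $x$ and $N^*(x)$ is the minimum number of weight-$\le B/2$ rooted subtrees needed to cover everything below depth $x$. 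The key point is that $N^*(x)$ is genuinely depth-sensitive: a subtree reaching below $x$ must spend $\ge x$ of its budget on the trunk down to level $x$, so it can absorb at most $B/2-x$ weight below $x$ (shared trunks only make this worse), which forces $N^*(x)$ to be large precisely in the deep-and-narrow regime where the naive bound $W$ fails. Dually, the light overhead equals $\int_0^{B/2}g(x)\,dx$ with $g(x)=\card{\{\,i<k:\ h_i>x\,\}}$.

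The crux --- and the step I expect to be the main obstacle --- is the amortized comparison $g(x)\le C\,\max(c(x),N^*(x))$, possibly after a constant level shift, which would give $\sum_{\text{light}}h_i\le C\,\mathrm{minCover}\le\tfrac{C}{2}\cost{\copt{T}}$ and, combined with the weight and heavy bounds, the constant $10$. Proving it amounts to showing that whenever $\pdfs{T}$ is forced to restart many routes at depth $>x$ --- each nearly full because the trunk to level $x$ plus a single excursion already exhausts the budget --- the optimum faces the same trunk-sharing bottleneck below level $x$ and must itself send $\Omega(g(x))$ subtrees across that level. I expect the careful budget accounting here (splitting each route's $B$ into the trunk part $\ge 2x$ and the at most $B-2x$ available below level $x$, and matching PDFS restarts to forced optimal crossings) to be where the real work lies, and where the explicit constant is ultimately determined.
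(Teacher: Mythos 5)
Your setup is sound and even elegant: the identity $\cost{\pdfs{T}}=2W+2\sum_{i=1}^{k-1}h_i$, the lower bound $\cost{\copt{T}}\geq 2W$, the observation that each non-final route stops just before a distinct downward edge $e_i'$ with $\routelen{R_i}+2\omega(e_i')>B$, the disposal of the ``heavy-stop'' indices, and the reformulation $\cost{\copt{T}}=2\cdot\mathrm{minCover}$ with the level-integral $\int_0^{B/2}\max(c(x),N^*(x))\,dx$ are all correct. But the proof is not complete: the entire difficulty of the theorem is concentrated in the one inequality you explicitly defer, $g(x)\le C\,\max(c(x),N^*(x))$, and you offer only the intuition that it ``should'' hold. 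That inequality is not routine. A route of $\pdfs{T}$ can stop at depth $h_i>x$ while having made essentially all of its progress \emph{above} level $x$ (it enters the deep region only in the last few steps of its budget and is then cut off); so the set of routes counted by $g(x)$ cannot be matched to level-$x$ crossings of the optimum by a local argument. One must control how the fixed DFS order interleaves shallow light subtrees hanging off a deep trunk with the deep material itself, and this is exactly where the paper spends its effort: it passes to the adversarial relaxation $\algAdversarial{T}$, inducts on the number of heavy edges, isolates the heavy path $P$ from $r$ to the first branching heavy vertex $r'$, and compares the potentials $x_j$ reached by $\copt{T}$ with a greedy sequence $y_j$ (Lemma~\ref{lem:y-min}) and then with the potentials $z_i$ reached by the descending and ascending parts of the DFS strategy (Lemma~\ref{lem:2step}). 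Even there, the comparison only goes through after perturbing $T$ into a Skinny Tree (Lemma~\ref{lem:perturbe} and the appendix) to break ties among the potentials of the light subtrees --- a warning sign that your pointwise inequality, as stated, is delicate and may need a reformulation (your hedge ``possibly after a constant level shift'' is doing a lot of unacknowledged work).

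In short: your decomposition is a genuinely different, more global route than the paper's subtree induction, and the parts you prove are fine, but what remains is not a technical verification --- it is the theorem. To complete your approach you would need to prove the amortized comparison between PDFS restart depths and forced optimal crossings of each level, which amounts to re-deriving the content of Theorem~\ref{thm:induction} (including some substitute for the Skinny Tree reduction) in the integral language. Until that step is supplied, the argument establishes only $\cost{\pdfs{T}}\le 2W+2\sum_{\mathrm{heavy}}h_i+2\sum_{\mathrm{light}}h_i\le 3\cdot\cost{\copt{T}}+2\sum_{\mathrm{light}}h_i$, with the last term unbounded.
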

The proof is postponed to the next parts of the paper and we finish this section by concluding that Theorem~\ref{thm:cost} indeed implies Theorem~\ref{thm:num-agents}.

\medskip
\begin{proof}[Proof of Theorem~\ref{thm:num-agents}]
We start with the following observation which relates the smallest possible number of routes in a $B$-exploration strategy and the minimum possible cost.
\begin{observation} \label{obs:opt}
Given $T$ and $B$, $\noroutes{\minRoutesStr}\geq \left\lceil\cost{\copt{T}}/B\right\rceil$, where $\minRoutesStr$ is a $B$-exploration strategy with minimum number of routes.
\end{observation}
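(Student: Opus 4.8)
The plan is to exploit the fact that the number-optimal strategy $\minRoutesStr$ is itself a feasible $B$-exploration strategy, and then compare it against the cost-optimal strategy $\copt{T}$ through their respective objective functions. The key conceptual point is that the two reference solutions optimize different quantities---$\minRoutesStr$ minimizes the number of routes, while $\copt{T}$ minimizes the total cost---so the only thing one must be careful about is not to conflate them; the bridge between them is that \emph{any} feasible strategy has cost at least $\cost{\copt{T}}$.

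First I would record the trivial upper bound on the cost of $\minRoutesStr$. By the definition of a $B$-exploration strategy, every route $R$ in $\minRoutesStr$ satisfies $\routelen{R} \leq B$. Summing over all $\noroutes{\minRoutesStr}$ routes yields
\[
\cost{\minRoutesStr} = \sum_{R \in \minRoutesStr} \routelen{R} \leq B \cdot \noroutes{\minRoutesStr}.
\]
Next I would invoke the optimality of $\copt{T}$ with respect to cost. Since $\minRoutesStr$ is a valid $B$-exploration strategy and $\copt{T}$ has minimum cost among all such strategies, we have $\cost{\copt{T}} \leq \cost{\minRoutesStr}$. Chaining the two inequalities gives $\cost{\copt{T}} \leq B \cdot \noroutes{\minRoutesStr}$, hence $\noroutes{\minRoutesStr} \geq \cost{\copt{T}}/B$.

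Finally I would upgrade the bound to a ceiling: because $\noroutes{\minRoutesStr}$ is a nonnegative integer, the inequality $\noroutes{\minRoutesStr} \geq \cost{\copt{T}}/B$ immediately implies $\noroutes{\minRoutesStr} \geq \left\lceil \cost{\copt{T}}/B \right\rceil$, which is the claim. There is no real obstacle here---the statement is a direct consequence of the definitions---and the only point worth stating explicitly is that feasibility of $\minRoutesStr$ is what licenses the comparison $\cost{\copt{T}} \leq \cost{\minRoutesStr}$, after which the per-route length bound of $B$ does all the remaining work.
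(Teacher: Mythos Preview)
Your proof is correct and follows exactly the same approach as the paper: bound $\cost{\minRoutesStr}\le B\cdot\noroutes{\minRoutesStr}$ from the per-route length cap, use cost-optimality of $\copt{T}$ to get $\cost{\copt{T}}\le\cost{\minRoutesStr}$, and take the ceiling since $\noroutes{\minRoutesStr}$ is an integer.
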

\begin{proof}
Each route of $\minRoutesStr$ is of length at most $B$.
Thus, $\noroutes{\minRoutesStr}\geq\cost{\minRoutesStr}/B$.
By definition of $\copt{T}$, $\cost{\minRoutesStr}\geq\cost{\copt{T}}$, and since $\noroutes{\minRoutesStr}$ is an integer, the claim follows.
\end{proof}
Recall that $\DFSroute=(v_0,\ldots,v_l)$ is the depth first search traversal of $T$ used to obtain $\pdfs{T}$, and the $i$-th route $R_i$ in $\pdfs{T}=(R_1,\ldots,R_k)$ makes progress on the depth first search traversal by traversing the part of $\DFSroute$ that starts at $v_{j_{i-1}}$ and ends at $v_{j_i}$.
By definition of $\pdfs{T}$, extending $R_i$ so that it makes progress with the walk $(v_{j_{i-1}},\ldots,v_{j_i},v_{j_{i+1}})$ would exceed its length to be more than $B$ for each $i<k$, i.e., $\routelen{R_i}+2\omega(\{v_{j_i},v_{j_{i+1}}\})>B$.
Consider a tree $T'$ obtained from $T$ by subdividing the edge $\{v_{j_i},v_{j_{i+1}}\}$ into two edges $\{v_{j_i},x_i\}$ and $\{x_i,v_{j_{i+1}}\}$ with weights $(B-\routelen{R_i})/2$ and $\omega(\{v_{j_i},v_{j_{i+1}}\})-(B-\routelen{R_i})/2$, respectively, for each $i\in\{1,\ldots,k-1\}$.
(Hence the sum of the two weights of the new edges $\{v_{j_i},x_i\}$ and $\{x_i,v_{j_{i+1}}\}$ equals $\omega(\{v_{j_i},v_{j_{i+1}}\})$, the weight of the subdivided edge.)
Note that the common nodes of $T$ and $T'$ (that is, the nodes of $T$) are visited by $\pdfs{T}$ and $\pdfs{T'}$ in the same order, both $\pdfs{T}$ and $\pdfs{T'}$ are $B$-exploration strategies and the length of each route in $\pdfs{T'}$, except for the last one, is of length exactly $B$.
The latter in particular implies
\begin{equation} \label{eq:no-routes-Tprime}
\noroutes{\pdfs{T'}}=\left\lceil\frac{\cost{\pdfs{T'}}}{B}\right\rceil.
\end{equation}
Note that
\begin{equation} \label{eq:cost-Tprime}
\cost{\copt{T}} = \cost{\copt{T'}}
\end{equation}
because, informally speaking, a strategy that minimizes the cost never reaches a node of degree two in order to return to previously visited node --- thus, in particular, a traversal of $\{v_{j_i},x_i\}$ is immediately followed by a traversal of $\{x_i,v_{j_{i+1}}\}$ and vice versa.

From Equation~\eqref{eq:no-routes-Tprime}, Theorem~\ref{thm:cost} applied to $T'$, Equation~\eqref{eq:cost-Tprime} and Observation~\ref{obs:opt} (used in this order) we conclude that
\[\noroutes{\pdfs{T'}}=\left\lceil\frac{\cost{\pdfs{T'}}}{B}\right\rceil \leq 10\cdot \left\lceil \frac{\cost{\copt{T'}}}{B} \right\rceil = 10\cdot \left\lceil \frac{\cost{\copt{T}}}{B} \right\rceil \leq 10\cdot\noroutes{\minRoutesStr},\]
and hence $\noroutes{\pdfs{T}}=\noroutes{\pdfs{T'}}$ completes the proof of Theorem~\ref{thm:num-agents}.
\end{proof}

\section{Bounding the cost of $\pdfs{T}$} \label{sec:analysis}

\subsection{Additional notation} \label{sec:additional-notation}

When referring to subtrees, we consider them always in the context of their distance from the root.
More specifically, we consider the \emph{potential of a node} $v$, denoted by $\varphi(v)$, to be defined as  $\varphi(v) = B/2 - d(r,v)$. 
If $u$ is the parent of $v$ in $T$, then we say that $u$ is the \emph{higher} endpoint of $\{u,v\}$ and $v$ is the \emph{lower} endpoint of the edge $\{u,v\}$; we also say that $\{u,v\}$ is a \emph{downward edge} of $u$.
For any subtree $T'$ of $T$, we define the \emph{potential} of $T'$, denoted $\varphi(T')$, to be the potential of its root.
Then, $2\varphi(T')$ is an upper bound on the total length of any route inside $T'$. 
We say that a route \emph{reaches} a potential $x$ in some subtree if it reaches a vertex having potential $x$.
Additionally, for any subtree $T'$ of $T$, we denote it \emph{weight} to be $\omega(T') = \sum_{e \in E(T')} \omega(e)$, where $E(T')$ is the edge set of $T'$.
In other words, the weight of $T$ to be the total weight of its edges.
We denote by $\tree{T}{v}$ the subtree of $T$ rooted at $v$ that contains $v$ and all its descendants, and by $\tree{T}{e}$ the tree composed of the edge $e$ and $\tree{T}{v}$ where $v$ is the lower endpoint of $e$.

We say that a subtree $T'$ of $T$ is \emph{heavy} if $\omega(T') > \varphi(T')$, and otherwise we say that $T'$ is \emph{light}.
We extend this terminology to vertices and edges: a vertex $v$ or an edge $e$ is \emph{heavy} if $\tree{T}{v}$ or $\tree{T}{e}$ is heavy, respectively.
Additionally, by $\hdeg(v)$ we denote the number of outgoing downward edges of $v$ that are heavy (note that if an edge is heavy, then both its endpoints are heavy as well).
Observe that if $v$ is any vertex of $T$ and $\tree{T}{v}$ is heavy, then one route is not enough to cover the entire $\tree{T}{v}$ in any $B$-exploration strategy.

\subsection{Adversarial DFS-exploration}
\label{sec:adversarial}

When analyzing the cost of $\pdfs{T}$ we will use a recursive approach where the $B$-exploration of any subtree $T'$ would be defined by taking $B=2\varphi(T')$, the maximum size of route starting and ending at the root of $T'$. However the first agent to reach subtree $T'$ may have performed other explorations before entering $T'$. Therefore we need to use a slightly generalized DFS $B$-exploration, called a \emph{$B'$-adversarial DFS $B$-exploration}, denoted by $\algAdversarial[B']{T}$, where the length of first route is bounded by $B'\leq B$. This is formally defined by replacing Equation~\eqref{eq:route-len} for $i=1$ in condition \ref{it:dfs2} in the definition of $\pdfs{T}$ by the following equation (see also Figure~\ref{fig:adfs}):
\begin{equation} \label{eq:route-len-new}
\routelen{(v_{j_{0}},v_{j_{0}+1}, \ldots, v_p)} + d(v_p,r) \le B'.
\end{equation}

\begin{figure}[hbt]
\begin{center}
\includegraphics[scale=0.75]{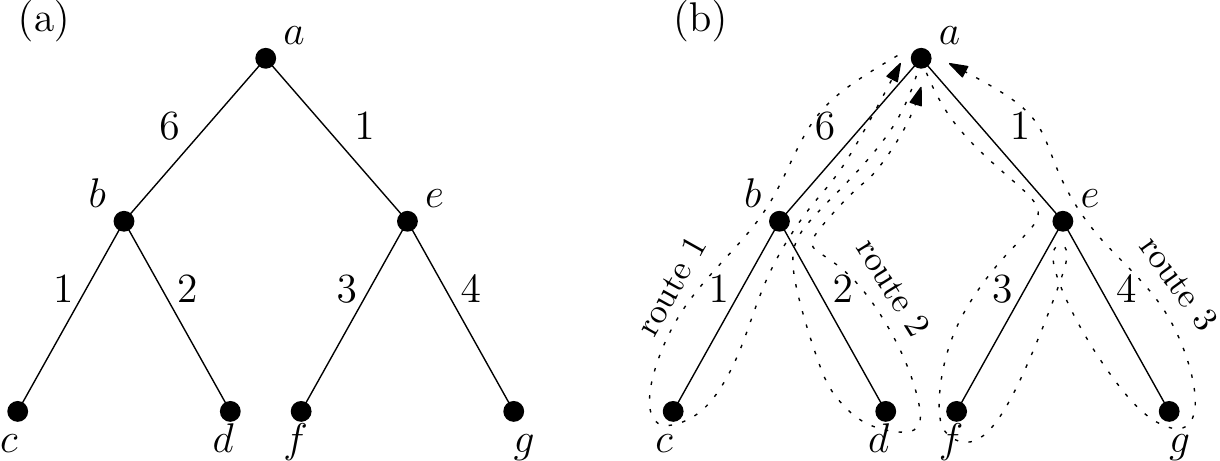}
\caption{(a) a tree with a depth first search traversal $(a,b,c,b,d,b,a,e,f,e,g,e,a)$;
         (b) a $B'$-adversarial DFS $B$-exploration $\cS$ with $B'=16$ and $B=20$ has three routes: $(a,b,c,b,a)$ (length $14\leq B'$), $(a,b,d,b,a,e,a)$ (length $18\leq B$) and $(a,e,f,e,g,e,a)$ (length $16\leq B$); $\cost{\cS}=14+16+16=46$. Note that two routes $(a,b,c,b,d,b,a)$ and $(a,e,f,e,g,e,a)$ constitute a DFS $B$-exploration strategy with cost $34$}
\label{fig:adfs}
\end{center}
\end{figure}
In other words, the length of the first route is bounded by $B'$ (Equation~\eqref{eq:route-len-new}) and the lengths of the remaining routes are bounded by $B$ (Equation~\eqref{eq:route-len} for $i>1$).

For a given tree $T$, we define an \emph{adversarial DFS exploration} of $T$, denoted by $\algAdversarial{T}$, and defined to be an exploration strategy $\algAdversarial[B']{T}$ that maximizes the cost, across all possible values of $B'$:
\begin{equation} \label{eq:ADFS-def0}
\cost{\algAdversarial{T}} = \max_{0 \le B' \le 2\varphi(T)} \cost{\algAdversarial[B']{T}}.
\end{equation}

In the following analysis, it will be convenient for us to use arguments that rely on the fact that $B'$, for our purposes, takes only one of the finite values from $[0,2\varphi(T)]$.
This is due to the above comment, namely, for $\tree{T}{r}$, where $r$ is the root of $T$, we have that $B'=B$ and for any other subtree $\tree{T}{v}$, the values of $B'$ interesting for us depend on the prefix of the route that starts at the root of $T$ and reaches $v$.
Thus, a simple inductive argument allows us to conclude that the value of $B'$ depends on all possible DFS $B$-exploration strategies of $T$ (the number of those is finite).
Hence, we denote by $\allB$ the finite set that consists all values $B'=B-x$ such that there exists a route in $T$ of length $x$ that starts at the root of $T$ and ends at $x$.
Thus, we can restate~\eqref{eq:ADFS-def0}:
\begin{equation*} 
\cost{\algAdversarial{T}} = \max_{B'\in\allB} \cost{\algAdversarial[B']{T}}.
\end{equation*}
Note that it follows from the definition that
\begin{equation*}
\cost{\algAdversarial{T}} \geq \cost{\pdfs{T}}.
\end{equation*}
Intuitively, if $v$ is any node of the tree $T$, then $\algAdversarial{\tree{T}{v}}$ is the worst case scenario of how a DFS $B$-exploration may perform in $\tree{T}{v}$ in terms of the cost; this worst case is understood as considering the worst possible ending point of the route that (in the entire tree $T$) precedes the considered strategy $\algAdversarial{\tree{T}{v}}$.

\begin{lemma} \label{lem:opt-vs-dfs'}
If $v$ is any node of $T$ and $e_1,e_2,\ldots,e_k$ are all downward edges of $v$, then
\begin{equation} \label{eq:opt1}  \cost{\copt{\tree{T}{v}}} = \sum_{1 \le i \le k} \cost{\copt{\tree{T}{e_i}}}, \end{equation}
\vspace{-0.25cm}
\begin{equation} \label{eq:dfs1}  \cost{\algAdversarial{\tree{T}{v}}} \le \sum_{1 \le i \le k} \cost{\algAdversarial{\tree{T}{e_i}}}.\end{equation}
\end{lemma}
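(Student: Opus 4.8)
The plan is to treat the two claims separately: the equality~\eqref{eq:opt1} via an excursion-decomposition argument for cost-optimal strategies, and the inequality~\eqref{eq:dfs1} via a structural analysis that ``cuts'' the adversarial exploration of $\tree{T}{v}$ at its root $v$ into independent per-branch explorations. Throughout I would use the fact that $v$ is the common root of $\tree{T}{v}$ and of every $\tree{T}{e_i}$, so all these trees share the same budget $2\varphi(v)=2\varphi(\tree{T}{v})=2\varphi(\tree{T}{e_i})$.

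For~\eqref{eq:opt1} I would prove both inequalities. The direction ``$\le$'' is immediate: the union of the routes of $\copt{\tree{T}{e_i}}$ over all $i$ is a valid $2\varphi(v)$-exploration strategy of $\tree{T}{v}$ (each route is based at $v$, has length at most $2\varphi(v)$, and together they cover $\tree{T}{v}$), of cost $\sum_i\cost{\copt{\tree{T}{e_i}}}$. For ``$\ge$'', the key observation is that in a tree every route $R$ based at $v$ decomposes uniquely into maximal \emph{excursions}: each excursion leaves $v$, stays inside a single branch $\tree{T}{e_i}$, and must return to $v$ before $R$ can enter another branch. Hence $\routelen{R}$ equals the sum over $i$ of the length $R$ spends in branch $i$, and collecting, for a fixed $i$, all excursions into $\tree{T}{e_i}$ made by the routes of $\copt{\tree{T}{v}}$ yields a valid exploration strategy of $\tree{T}{e_i}$ (same length bound, and it covers $\tree{T}{e_i}$ since all of $\tree{T}{v}$ is covered). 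Summing $(\text{length in branch }i)\ge\cost{\copt{\tree{T}{e_i}}}$ over $i$ gives ``$\ge$'', and thus equality.

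For~\eqref{eq:dfs1} I would first fix a value $B^*$ attaining the maximum defining $\algAdversarial{\tree{T}{v}}$, and recall that the DFS route decomposes as the concatenation $\DFSroute=\DFSroute^{(1)}\circ\cdots\circ\DFSroute^{(k)}$, where $\DFSroute^{(i)}$ is exactly the DFS route of $\tree{T}{e_i}$, a closed walk at $v$, in the order the branches are visited. I would then cut every route of $\algAdversarial[B^*]{\tree{T}{v}}$ at the branch boundaries, i.e.\ at the visits to $v$ separating consecutive $\DFSroute^{(i)}$, obtaining for each branch a sequence of closed subwalks at $v$ lying in $\tree{T}{e_i}$. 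The central claim is that this sequence \emph{is} $\algAdversarial[B'_i]{\tree{T}{e_i}}$, where $B'_i$ is the budget still available in the route at the moment it first enters branch $i$ (with $B'_1=B^*$): the greedy rule ``go straight to where progress stopped, then advance as far as the remaining budget permits'' restricts branchwise unchanged, the first subwalk into branch $i$ inherits the reduced budget $B'_i\le 2\varphi(v)$, and every later subwalk of branch $i$ begins a fresh route at $v$ and so carries the full budget $2\varphi(v)$. Since cutting preserves total length, $\cost{\algAdversarial[B^*]{\tree{T}{v}}}=\sum_i\cost{\algAdversarial[B'_i]{\tree{T}{e_i}}}$, and each term is at most $\cost{\algAdversarial{\tree{T}{e_i}}}$ because, by~\eqref{eq:ADFS-def0}, the latter maximizes cost over all first-route budgets in $[0,2\varphi(\tree{T}{e_i})]$ and $B'_i$ lies in this range; this yields~\eqref{eq:dfs1}.

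The main obstacle I anticipate is the central claim of the last paragraph: rigorously verifying that restricting the adversarial exploration to one branch produces exactly an adversarial exploration of that branch. This needs careful bookkeeping of the budget carried across a boundary, and in particular the handling of a route that finishes $\DFSroute^{(i)}$ and spills into $\tree{T}{e_{i+1}}$ — one must check that its branch-$i$ part is precisely a final route of $\algAdversarial[B'_i]{\tree{T}{e_i}}$ that empties the branch and returns to $v$, while its branch-$(i+1)$ part is the first route of $\algAdversarial[B'_{i+1}]{\tree{T}{e_{i+1}}}$ with $B'_{i+1}\le 2\varphi(v)$. By contrast, the equality~\eqref{eq:opt1} should be routine once the excursion decomposition is in place.
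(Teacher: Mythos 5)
Your proposal is correct and follows essentially the same route as the paper: the ``$\le$'' direction of \eqref{eq:opt1} by concatenating branch strategies, the ``$\ge$'' direction by splitting each route of $\copt{\tree{T}{v}}$ at its visits to $v$ and regrouping per branch, and \eqref{eq:dfs1} by cutting the adversarial exploration at $v$ into per-branch pieces $\algAdversarial[b_i']{\tree{T}{e_i}}$ and invoking the maximization over $B'$. The only difference is one of explicitness: you spell out the budget bookkeeping for $B_i'$ at branch boundaries, which the paper simply asserts.
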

\begin{proof}
Informally, equality in~\eqref{eq:opt1} for an optimal solution follows from the fact that $\copt{\tree{T}{v}}$ has the freedom to pick the length of each route to be an arbitrary number in $\allB$.
Any strategies for $\tree{T}{e_1},\tree{T}{e_2}, \ldots, \tree{T}{e_k}$ can be translated into strategy for $\tree{T}{v}$: the latter one is constructed by simply concatenating the former exploration strategies.
Similarly, if one takes an exploration strategy $\copt{\tree{T}{v}}$, then one can assume without affecting its cost that each of its routes has only two occurrences of the root: it is the first and last vertex of the route.
But then, such a strategy $\copt{\tree{T}{v}}$ can be partitioned into the corresponding strategies for the trees $\tree{T}{e_1},\tree{T}{e_2}, \ldots, \tree{T}{e_k}$.

We now prove \eqref{eq:dfs1}.
Consider an exploration strategy $\algAdversarial{\tree{T}{v}}$.
Each route of this strategy is of length at most $2\varphi(T)$.
Obtain an exploration strategy $\cS$ by partitioning each route in $\algAdversarial{\tree{T}{v}}$ in such a way that the concatenation of all routes in $\cS$ equals the concatenation of all routes in $\algAdversarial{\tree{T}{v}}$ and no route in $\cS$ has $v$ as an internal vertex.
(Thus, each route of $\cS$ starts and ends at $v$.)
Note that $\cost{\cS}=\cost{\algAdversarial{\tree{T}{v}}}$.
Now, $\cS$ can be partitioned into $\cS_1,\ldots,\cS_k$ such that $\cS_i$ is a $b_i'$-adversarial DFS exploration strategy of $\tree{T}{e_i}$ for some $b_i'\leq 2\varphi(\tree{T}{v})$, i.e., $\cS_i=\algAdversarial[b_i']{\tree{T}{e_i}}$, for each $i\in\{1,\ldots,k\}$ and the concatenation of $\cS_1,\ldots,\cS_k$ gives $\cS$.
Thus,
\[\cost{\algAdversarial{\tree{T}{v}}} = \cost{\cS} = \sum_{i=1}^k \cost{\algAdversarial[b_i']{\tree{T}{e_i}}}.\]
To conclude the proof, observe that by the definition of adversarial exploration
\[\cost{\algAdversarial[b_i']{\tree{T}{e_i}}} \leq \cost{\algAdversarial{\tree{T}{e_i}}},\quad i\in\{1,\ldots,k\}.\qedhere\]
\end{proof}

The proof of Theorem~\ref{thm:cost} will follow from the following two results (and the fact that $\cost{\pdfs{T}}\leq\cost{\algAdversarial{T}}$.)

\begin{lemma} \label{lem:light}
For any tree $T$, if $T$ is light, then $\cost{\algAdversarial{T}} < 2 \cdot \cost{\copt{T}}.$
\end{lemma}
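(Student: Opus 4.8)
The plan is to pin down both quantities exactly for a light tree and then compare them. For $\copt{T}$, note that since $T$ is light we have $2\omega(T)\le 2\varphi(T)=B$, so the whole depth first search traversal $\DFSroute$, whose length is exactly $2\omega(T)$, already fits into a single route; conversely, in any exploration strategy every edge lies on a cut that a closed route crosses an even (hence at least two) number of times, so the total cost is at least $2\omega(T)$. Thus I expect to obtain $\cost{\copt{T}}=2\omega(T)$, which will serve as the reference quantity.

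The heart of the argument is to show that, no matter which value $B'\in\allB$ the adversary picks, the strategy $\algAdversarial[B']{T}$ uses at most two routes. The first route stops making progress at some node $v_{j_1}$. I would then argue that the second route, which enjoys the full budget $B$, already finishes the traversal: it goes directly from $r$ to $v_{j_1}$ and then follows the remainder of $\DFSroute$, which ends back at $r$, so its length is $d(r,v_{j_1})+\bigl(2\omega(T)-\routelen{(v_0,\ldots,v_{j_1})}\bigr)$. Since the progress portion of the first route is a walk from $r$ to $v_{j_1}$ we have $d(r,v_{j_1})\le\routelen{(v_0,\ldots,v_{j_1})}$, whence this length is at most $2\omega(T)\le B$. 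Hence the second route covers everything and the exploration stops.

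With the two-route structure in hand, the cost computation is short: the two progress portions together reconstitute the whole of $\DFSroute$ and contribute $2\omega(T)$, while the only overhead is the detour to $v_{j_1}$, paid once when the first route returns and once when the second route departs, for a total of $2d(r,v_{j_1})$. Thus $\cost{\algAdversarial[B']{T}}=2\omega(T)+2d(r,v_{j_1})$. To finish I would bound $d(r,v_{j_1})$: the $r$--$v_{j_1}$ path uses a subset of the edges of $T$, so $d(r,v_{j_1})\le\omega(T)$, and I would argue this is strict whenever two routes are genuinely used --- equality would force $T$ to be the single path from $r$ to $v_{j_1}$, but then the first route already reaches its far endpoint and returns, covering $T$ in one route, a contradiction. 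Taking the maximum over the finite set $\allB$ then yields $\cost{\algAdversarial{T}}<2\omega(T)+2\omega(T)=2\cost{\copt{T}}$.

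The step I expect to be the main obstacle is the two-route claim, specifically making precise that the second route, granted the full budget $B$, necessarily completes the traversal; this rests on the inequality $d(r,v_{j_1})\le\routelen{(v_0,\ldots,v_{j_1})}$ together with lightness. A secondary subtlety is justifying the strict inequality $d(r,v_{j_1})<\omega(T)$ and checking the degenerate case in which the first route makes no progress (where the bound holds trivially), so that strictness survives the maximization over $\allB$.
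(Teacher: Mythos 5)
Your proof is correct and takes essentially the same approach as the paper's: lightness ($\omega(T)\le\varphi(T)$) forces the adversarial exploration to terminate after at most two routes, each of length at most $2\omega(T)$ with strict inequality for at least one of them, giving a total below $4\omega(T)\le 2\cdot\cost{\copt{T}}$. Your version is merely more explicit, computing the exact cost $2\omega(T)+2d(r,v_{j_1})$ and arguing strictness of $d(r,v_{j_1})<\omega(T)$, where the paper simply notes that one of the two routes is strictly shorter than $2\omega(T)$.
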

\begin{proof}
If $T$ is light, then observe that $\algAdversarial{T}$ either consists of one route, in which case $\cost{\algAdversarial{T}}=\cost{\copt{T}}$, or it contains at least two routes but then the second route, having length up to $2\varphi(T)$, will explore all remaining vertices of $T$ since $\varphi(T)\geq\omega(T)$ holds for a light tree.
Thus, in the latter case $\algAdversarial{T}$ has exactly two routes, one of them being strictly shorter than $2\omega(T)$, which gives $\cost{\algAdversarial{T}} < 4\omega(T) \leq 2 \cdot \cost{\copt{T}}$.
\end{proof}

\begin{theorem} \label{thm:induction}
If $T$ is heavy and $r$ is its root, then:
\begin{enumerate} [label={\normalfont{(\roman*)}}, leftmargin=*]
\item\label{it:rec1} if $\hdeg(r) = 1$, then $\cost{\algAdversarial{T}} < 10 \cdot \cost{\copt{T}} -  8 \cdot \varphi(T)$,
\item\label{it:rec2} if $\hdeg(r) \not=1$, then $\cost{\algAdversarial{T}} < 10 \cdot \cost{\copt{T}} - 16 \cdot \varphi(T)$.
\end{enumerate}
\end{theorem}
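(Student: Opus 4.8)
The plan is to establish both inequalities at once by strong induction on the number of edges of $T$; the base cases are exactly the heavy trees with $\hdeg(r)=0$, which (as I explain below) are settled without the inductive hypothesis. Write $e_1,\dots,e_k$ for the downward edges of the root $r$, split their indices into the heavy set $H$ (so $\card{H}=\hdeg(r)$) and the light set $L$, and abbreviate $\varphi=\varphi(T)$, $O_H=\sum_{i\in H}\cost{\copt{\tree{T}{e_i}}}$ and $O_L=\sum_{i\in L}\cost{\copt{\tree{T}{e_i}}}$. Every case starts from Lemma~\ref{lem:opt-vs-dfs'}, giving $\cost{\copt{T}}=O_H+O_L$ and $\cost{\algAdversarial{T}}\le\sum_{i}\cost{\algAdversarial{\tree{T}{e_i}}}$, and from Lemma~\ref{lem:light}, which bounds each light term by $\cost{\algAdversarial{\tree{T}{e_i}}}<2\cost{\copt{\tree{T}{e_i}}}$; hence the light branches together contribute less than $2O_L$.

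I would first handle all cases in which every heavy branch is a \emph{proper} subtree of $T$, namely $\hdeg(r)\ge 2$, and $\hdeg(r)=1$ with $L\neq\emptyset$. In such a branch $\tree{T}{e_i}$ the root $r$ has the single downward edge $e_i$, which is heavy, so $\hdeg=1$ inside $\tree{T}{e_i}$ and part~\ref{it:rec1} of the inductive hypothesis applies, giving $\cost{\algAdversarial{\tree{T}{e_i}}}<10\cost{\copt{\tree{T}{e_i}}}-8\varphi$ (note $\varphi(\tree{T}{e_i})=\varphi$). Summing over $H$ and adding the light contribution yields $\cost{\algAdversarial{T}}<10O_H+2O_L-8\hdeg(r)\varphi$. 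For $\hdeg(r)\ge 2$ the target $10(O_H+O_L)-16\varphi$ then follows from $(2-\hdeg(r))\varphi\le 0\le O_L$, and for $\hdeg(r)=1$, $L\neq\emptyset$ the target $10(O_H+O_L)-8\varphi$ follows from $2O_L\le 10O_L$. The non-recursive case $\hdeg(r)=0$ is immediate: then $\cost{\algAdversarial{T}}<2O_L=2\cost{\copt{T}}$, while heaviness gives $\cost{\copt{T}}\ge 2\omega(T)>2\varphi$, so $2\cost{\copt{T}}<10\cost{\copt{T}}-16\varphi$, which is~\ref{it:rec2}.

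The only genuinely hard case is $\hdeg(r)=1$ with $L=\emptyset$: here $r$ has a single heavy child $w$ across an edge $e$ of weight $a$, so $T=\tree{T}{e}$ is not proper and I must recurse into the strictly smaller heavy tree $\tree{T}{w}$, with $\varphi(w)=\varphi-a$. The governing observation is that every route exploring $\tree{T}{e}$ is a route exploring $\tree{T}{w}$ padded by the $2a$ spent on traversing $e$ twice; this gives $\cost{\copt{\tree{T}{e}}}\ge\cost{\copt{\tree{T}{w}}}+2a\cdot N_{\min}$, where $N_{\min}$ is the minimum number of routes needed for $\tree{T}{w}$, and $\cost{\algAdversarial{\tree{T}{e}}}\le\cost{\algAdversarial{\tree{T}{w}}}+2a\cdot N_A$, where $N_A$ is the largest number of routes in an adversarial exploration of $\tree{T}{w}$. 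Substituting the inductive bound on $\cost{\algAdversarial{\tree{T}{w}}}$ collapses the required inequality to the per-route overhead on $e$: it suffices to show $N_A\le 10N_{\min}-4$, and part~\ref{it:rec2} supplies an additional $8\varphi(w)$ of slack whenever $\hdeg(w)\neq 1$.

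Taming this overhead is where I expect the real work, and it is the main obstacle. The lower bound $N_{\min}\ge\omega(\tree{T}{w})/\varphi(w)\ge 2$ is easy, since each route covers at most $\varphi(w)$ units of edge weight. For the upper bound on $N_A$ the key structural fact I would prove is that an adversarial route of length at most $\varphi(w)$ can stop early only because the next step of the underlying depth first search descends a child edge of weight more than $\varphi(w)/2$; the following route must then reach that depth and so has length exceeding $\varphi(w)$. Consequently short routes never occur consecutively, every short route except possibly the last is followed by a long one, and $N_A$ is at most twice the number of long routes plus a constant absorbing the special first route. Turning this into the clean inequality $N_A\le 10N_{\min}-4$ --- relating the number of long adversarial routes back to $N_{\min}$, and using the extra $8\varphi(w)$ from~\ref{it:rec2} precisely in the recurring regime $\hdeg(w)\neq 1$ --- is the delicate step whose constants must be reconciled with the $10$, $8$, and $16$ in the statement.
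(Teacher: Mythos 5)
Your base case, your $\hdeg(r)\ge 2$ case, and your reduction of ``$\hdeg(r)=1$ with light edges at the root'' to the single-child case are all correct, and the first two coincide with the paper's own argument. The gap is in the case you yourself flag as hard, $\hdeg(r)=1$ with $L=\emptyset$, and it is not just unfinished bookkeeping. Your reduction needs $N_A\le 10N_{\min}-4$, where $N_A$ bounds the number of routes in an adversarial DFS exploration of $\tree{T}{w}$ and $N_{\min}$ is the optimal route count for $\tree{T}{w}$. That is essentially Theorem~\ref{thm:num-agents} for the subtree --- a statement the paper only derives \emph{from} the cost bound you are in the middle of proving --- so you cannot invoke it, and it does not follow from the inductive hypothesis with the constants you need. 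Concretely, your short/long alternation argument gives at best $\cost{\algAdversarial{\tree{T}{w}}}\gtrsim \varphi(w)(N_A-2)/2$ (long routes have length $>\varphi(w)$), while $N_{\min}$ relates to the optimal cost only through $\cost{\copt{\tree{T}{w}}}\le 2\varphi(w)N_{\min}$; chaining these through the inductive bound yields roughly $N_A<40N_{\min}$, a factor four too weak. Substituting $N_A\approx 40N_{\min}$ back into $\cost{\algAdversarial{T}}\le\cost{\algAdversarial{\tree{T}{w}}}+2aN_A$ leaves an uncancelled positive term of order $aN_{\min}$, so the induction does not close; and since you peel one edge at a time, this per-edge deficit accumulates along the entire chain of degree-one heavy vertices with nothing to amortize it against.

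The paper avoids exactly this trap by treating the whole path $P$ from $r$ down to the first vertex $r'$ with $\hdeg(r')\neq 1$ in a single inductive step. The number of times the DFS strategy traverses $P$ is charged not to a route count but to $D_{deep}$, the DFS cost inside $\tree{T}{r'}$, via $D_{flat}\le 2\omega(P)\bigl(D_{deep}/(2\varphi(r'))+1\bigr)$ (inequality~\eqref{eq:Dflat}), and this is set against the matching lower bound $O_{flat}\ge \omega(P)\,O_{deep}/\varphi(r')$ (inequality~\eqref{eq:Oflat}) that \emph{every} exploration strategy must pay on $P$; the $-16\varphi(r')$ slack carried by the inductive hypothesis at $r'$ is what absorbs the additive error and keeps the constant at $10$. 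The light subtrees hanging off $P$ (which your restructuring pushes into the easy case, but which must eventually be confronted somewhere along the chain) are handled by the separate greedy-sequence comparison of Lemmas~\ref{lem:y-min} and~\ref{lem:2step}, after the Skinny Tree normalization of Lemma~\ref{lem:perturbe}. To rescue your plan you would need an independent proof of a route-count inequality of the form $N_A\le 10N_{\min}-4$ for the child subtree, and at that point you would essentially be re-proving the main theorem by other means.
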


In order to prove the above Theorem we will first define a special class of heavy trees called \emph{Skinny Tree} which has the following property.
\begin{enumerate}[label={\textbf{(ST)}}]
 \item\label{it:property} 
\textbf{Skinny Tree Property:} \emph{If the root $r$ of $T$ has heavy degree equal to one, then consider the longest path in $T$ that connects $r$ to such a $r'$ that each internal vertex of the path has heavy degree equal to $1$ (i.e. each edge of the path is heavy).
We then require that each vertex of this path, except for $r'$, has at most one light edge incident to it.}
\end{enumerate}

We can show (c.f. Section~\ref{sec:rearrangement-appendix} in the appendix) how to rearrange any tree to have the above property and we also show that: 
\begin{lemma} \label{lem:perturbe}
For any tree T, there exists $\varepsilon>0$ and corresponding tree $\Tprime{\varepsilon}$ such that 
(i) $\Tprime{\varepsilon}$ satisfies Property~\ref{it:property}, and 
(ii) if Theorem~\ref{thm:induction} holds for $\Tprime{\varepsilon}$ then Theorem~\ref{thm:induction} holds for $T$.
\end{lemma}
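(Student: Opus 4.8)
The plan is to obtain $\Tprime{\varepsilon}$ from $T$ by a finite sequence of local \emph{skinnifying} moves, each one killing a single violation of Property~\ref{it:property} while steering the three quantities appearing in Theorem~\ref{thm:induction} in a controlled direction. First I would identify the heavy spine of $T$: when $\hdeg(r)=1$, let $r=u_0,u_1,\ldots,u_s=r'$ be the maximal path along which every internal $u_i$ has $\hdeg(u_i)=1$, so that every spine edge is heavy. A vertex $u_i$ with $i<s$ violates Property~\ref{it:property} precisely when, besides the heavy spine edge $\{u_i,u_{i+1}\}$, it carries at least two light downward edges. The move consolidates the light edges at such a $u_i$, leaving a single light edge there and relocating the others toward $r'$, the only vertex that Property~\ref{it:property} exempts, using a parameter $\varepsilon>0$ to place the relocated subtrees so that they stay strictly light and so that the heavy/light status---and hence the heavy degree---of every spine vertex, and of $r$ in particular, is preserved.

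For the implication~(ii) I would then verify four monotonicity properties of each move, all of which Lemma~\ref{lem:opt-vs-dfs'} lets me check locally at the affected subtrees rather than on the whole tree. The move (a) preserves $\varphi(T)=\varphi(\Tprime{\varepsilon})$, since it never touches the root or its depth; (b) preserves the case distinction, $\hdeg(r)$ being unchanged; (c) does not increase the optimal cost, $\cost{\copt{\Tprime{\varepsilon}}}\le\cost{\copt{T}}$, which I expect to follow from the additivity~\eqref{eq:opt1} of $\copt{\cdot}$ over downward edges, a cost-optimal solution only profiting from the relocation; and (d) does not decrease the adversarial cost, $\cost{\algAdversarial{T}}\le\cost{\algAdversarial{\Tprime{\varepsilon}}}$, which I expect from the subadditivity~\eqref{eq:dfs1} together with the fact that pushing mass deeper along the spine only lengthens the backtracking that the adversary can force. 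Writing $c\in\{8,16\}$ for the constant dictated by $\hdeg(r)$, these four facts chain with the hypothesis for $\Tprime{\varepsilon}$ to give
\[
\cost{\algAdversarial{T}}\le\cost{\algAdversarial{\Tprime{\varepsilon}}} < 10\cdot\cost{\copt{\Tprime{\varepsilon}}}-c\cdot\varphi(\Tprime{\varepsilon}) \le 10\cdot\cost{\copt{T}}-c\cdot\varphi(T),
\]
which is exactly Theorem~\ref{thm:induction} for $T$. Note that the strictness is inherited from $\Tprime{\varepsilon}$ with only weak monotonicities needed around it, so no limiting argument on $\varepsilon$ is required; the role of $\varepsilon$ is solely to make the rearrangement realizable.

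The step I expect to be the main obstacle is designing a single local move that simultaneously achieves (c) and (d) while respecting the heaviness constraints. The tension is genuine: relocating a light subtree to a deeper spine vertex lowers its potential and so threatens to turn it heavy, which would corrupt the very spine along which the move is defined---for instance by raising $\hdeg(r')$ from $0$ to $1$ and thereby extending the spine, or by spawning fresh violations. Resolving this seems to require either shrinking the connecting edges of the relocated subtrees, which helps~(c) but endangers~(d), or batching several relocated subtrees at $r'$ so that its heavy degree becomes at least two and the spine still terminates there. The parameter $\varepsilon$ supplies the slack that makes such placements strictly light and keeps every heavy/light comparison strict, so that no subtree is left on the boundary $\omega(T')=\varphi(T')$ where a small structural choice could accidentally flip a classification. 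Pinning down one move that provably keeps $\cost{\copt{\cdot}}$ from rising and $\cost{\algAdversarial{\cdot}}$ from falling across all the heaviness cases is where the bulk of the work, deferred to the appendix, lies.
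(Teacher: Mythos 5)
There is a genuine gap: your argument hinges on a local ``skinnifying'' move that you never construct, and you yourself flag its construction as ``where the bulk of the work lies.'' The tension you identify is real and, in the exact form you demand, very likely unresolvable. Relocating a light subtree $\tree{T}{e}$ from a spine vertex $u_i$ down to $r'$ while keeping $\cost{\copt{\cdot}}$ from rising forces you to shorten its connecting edge by $d(u_i,r')$, which is impossible whenever $\omega(e)<d(u_i,r')$; and even when it is possible, the greedy cutting of the DFS walk into routes of length $B$ is reshuffled in ways that can push $\cost{\algAdversarial{\cdot}}$ in either direction, so the exact monotonicity (d) that your chain of inequalities requires is unsupported. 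Batching several light subtrees at $r'$ also changes $\tree{T}{r'}$ itself, and with it $O_{deep}$, $D_{deep}$ and the inductive hypothesis applied to that subtree, which a purely local accounting via Lemma~\ref{lem:opt-vs-dfs'} does not capture.

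The paper avoids this tension by making a quantitatively much weaker claim. Rather than relocating subtrees over long distances, it subdivides the heavy downward edge at an offending spine vertex into a path of $d-1$ edges of length $\varepsilon/(d-1)$ and hangs one light subtree at each new vertex, shrinking each connecting edge by $\varepsilon$ so that no root-to-node distance increases. Every spine vertex then carries at most one light edge, but nothing moves by more than $\varepsilon$. The price is that one only obtains the approximate inequality $\cost{\algAdversarial{T}}\leq\cost{\algAdversarial{\Tprime{\varepsilon}}}+4\varepsilon n^2$ (together with $\cost{\copt{\Tprime{\varepsilon}}}\leq\cost{\copt{T}}$ and $\varphi(\Tprime{\varepsilon})\leq\varphi(T)$), not the exact monotonicity you posit. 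The error term is then absorbed by noting that the inequalities of Theorem~\ref{thm:induction} are strict, hence hold with some positive slack $\tau$, and by choosing $\varepsilon<\tau/(4n^2)$ --- precisely the limiting argument on $\varepsilon$ that you declare unnecessary; a separate finite ``deficiency'' argument is also needed to guarantee that for $\varepsilon$ small enough the same strategies are feasible in $T$ and in $\Tprime{\varepsilon}$. Your assertion that ``no limiting argument on $\varepsilon$ is required'' is therefore the crux of what is missing: without exact monotonicity --- which you do not establish and which your own discussion casts doubt on --- some quantitative absorption of the perturbation error is unavoidable.
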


Due to above result, we can now focus on proving Theorem~\ref{thm:induction} for any tree $T$ with the above-mentioned property in the rest of the paper (Section~\ref{sec:induction}).

\newcommand{\lle}{l}
\subsection{Proof of Theorem~\ref{thm:induction} for Skinny Trees}
\label{sec:induction}
We will proceed by induction on the number of heavy edges in a tree.
This is a valid approach since the parent of a heavy node is also heavy.

For the base case consider $T$ with no heavy edges.
In particular we have that $\hdeg(r)=0$. Denote downward edges at $r$ by $e_1,e_2,\ldots,e_{\lle}$.
For each $i \in\{1,\ldots,\lle\}$, $\tree{T}{e_i}$ is light and hence by Lemma~\ref{lem:light}, $\cost{\algAdversarial{\tree{T}{e_i}}} < 2 \cdot \cost{\copt{\tree{T}{e_i}}}$.
Thus in particular, by \eqref{eq:opt1} and \eqref{eq:dfs1} and $\cost{\copt{T}} \ge 2 \omega(T)$,
\[\cost{\algAdversarial{T}} <  2 \cost{\copt{T}} \le 10 \cost{\copt{T}} - 16\omega(T) \le 10 \cost{\copt{T}} - 16 \varphi(T).\]

For the induction step, we assume that Theorem~\ref{thm:induction} holds for all heavy proper subtrees of $T$.
In what follows we consider two cases: when $\hdeg(r)>1$ and $\hdeg(r)=1$.

\subsubsection*{Case of $\hdeg(r)>1$} 
Let $e_1,\ldots,e_h$ be the heavy downward edges at $r$ and $e'_1, \ldots, e'_{\lle}$ be the light downward edges at $r$.
By induction hypothesis (and precisely Theorem~\ref{thm:induction}\ref{it:rec1}) we obtain
\[\cost{\algAdversarial{\tree{T}{e_i}}} < 10 \cdot \cost{\copt{\tree{T}{e_i}}} - 8 \varphi(\tree{T}{e_i})\]
for each $i\in\{1,\ldots,h\}$.
Then, by \eqref{eq:dfs1} of Lemma~\ref{lem:opt-vs-dfs'}, by Lemma~\ref{lem:light}, and by $\varphi(T) = \varphi(\tree{T}{e_i}))$ (used in this order):
\begin{eqnarray*}
\cost{\algAdversarial{T}} & \leq  & \sum_{i=1}^{h} \algAdversarial{\tree{T}{e_i}} + \sum_{i=1}^{\lle} \algAdversarial{\tree{T}{e_i'}} \\
                           & <  & \sum_{i=1}^{h} \big(10 \cost{\copt{\tree{T}{e_i}}} - 8 \varphi(T) \big) + \sum_{i=1}^{\lle} 2 \cost{\copt{\tree{T}{e'_i}}} \\
                           & \le & 10 \left(  \sum_{i=1}^{h} \cost{\copt{\tree{T}{e_i}}} +  \sum_{i=1}^{\lle} \cost{\copt{\tree{T}{e'_i}}} \right) - 8 \varphi(T) \cdot h \\
                           & \le & 10 \cost{\copt{T}} - 16 \varphi(T).
\end{eqnarray*}
The last inequality is due to \eqref{eq:opt1} and $h=\hdeg(r)\ge 2$.

\subsubsection*{Case of $\hdeg(r)=1$} 
Let $r'$ be the closest descendant of $r$ in $T$ that is heavy and satisfies $\hdeg(r') \neq 1$.
Note that such a vertex always exists and $r'$ is unique.
Let $P$ denote the path connecting $r$ to $r'$.
Additionally, we denote by $e_1,e_2, \ldots, e_{\lle}$ all light edges incident to vertices in $V(P)\setminus\{r'\}$ in the \emph{non-decreasing} order of their potentials.
(We remark here that the subtree rooted at $r'$ has been covered by the base case of the induction and by the case when the heavy degree is greater than one.)
Denote $\varphi_i = \varphi(e_i)$, $i\in\{1,\ldots,\lle\}$.
Due to Lemma~\ref{lem:perturbe} (and more precisely by the fact that thanks to Lemma~\ref{lem:perturbe} we assume that in the tree $T$ all edges $e_1,\ldots,e_{\lle}$ have pairwise different potentials) we have:
\begin{equation*} 
\varphi(r) > \varphi_{\lle} > \cdots > \varphi_2 > \varphi_1 > \varphi(r')=\varphi_0.
\end{equation*}
See Figure~\ref{fig:hs}(a) that illustrates the path $P$ and placements of the edges $e_i$ and the corresponding potentials.
\begin{figure}[htb]
\begin{center}
\includegraphics[scale=0.8]{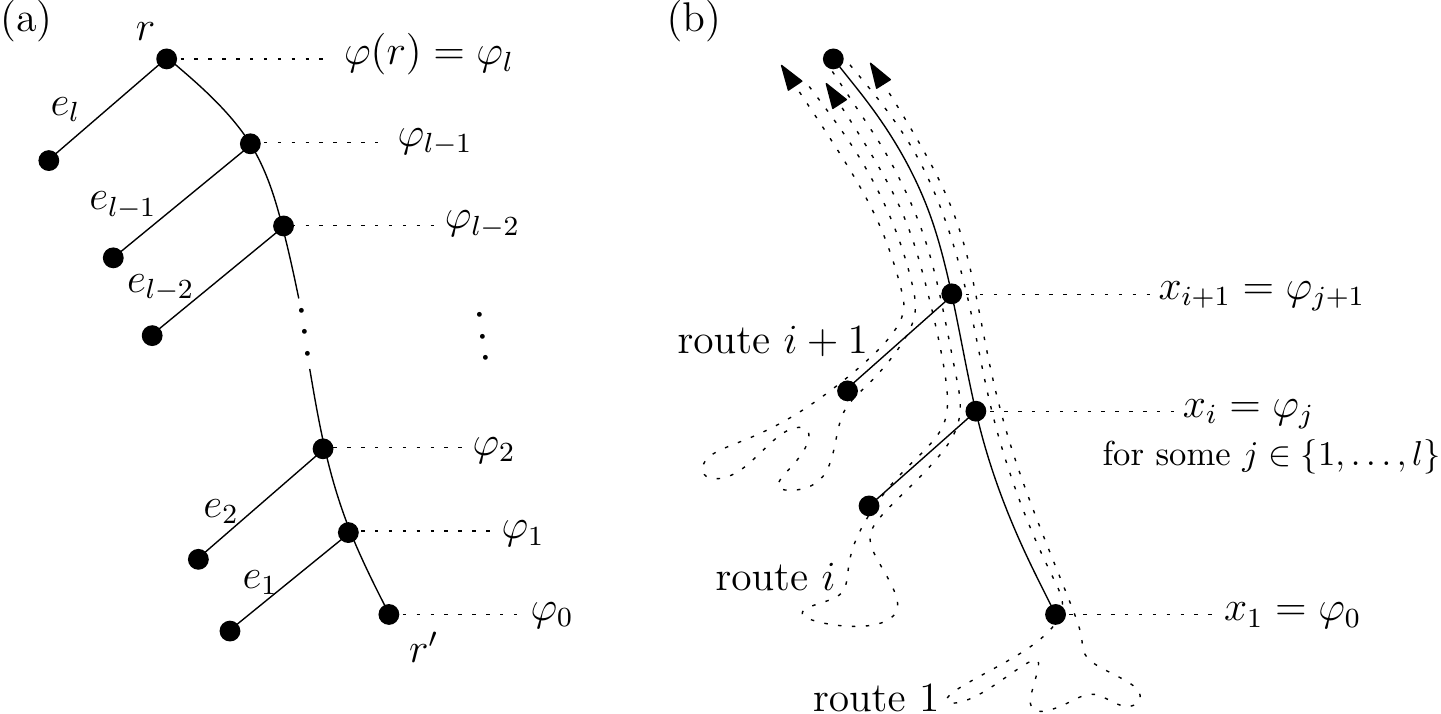}
\caption{(a) the path $P$ (although it may contain vertices of degree two, they are trivial from the point of view of the analysis, thus omitted in the picture);
         (b) illustration of $x_i$'s. Note that the first route (and possibly arbitrarily many more routes) always reaches $r'$}
\label{fig:hs}
\end{center}
\end{figure}

We take for brevity
\[w_i = \omega(\tree{T}{e_i}) = \frac12\cost{\copt{\tree{T}{e_i}}} \text{\quad for\ } i \in \{1,\ldots,\lle\},\quad w_{0} = \frac12\cost{\copt{\tree{T}{r'}}}.\]

Let $c$ be the number of routes in $\copt{T}$, and we denote by $x_i$ the lowest potential $i$-th route reached on the path $P$ (we ignore potentials it reached in subtrees --- see Figure~\ref{fig:hs}(b)), where the routes are without loss of generality ordered so that
$x_1 \le x_2 \le \cdots \le x_c.$

Consider $j\in\{1,\ldots,c\}$.
Informally speaking, the first $j$ routes of $\copt{T}$ need to cover all subtrees $\tree{T}{e_i}$ such that the potential of the higher endpoint of $e_i$ is \emph{strictly} smaller than $x_j$; otherwise some vertices would not be visited by $\copt{T}$.
The total weight of these subtrees is $\sum_{i : \varphi_i < x_j} w_i$.
Observe, that the total length of all parts of an $i$-th route that do not belong to the path $P$ is at most $2x_i$.
Thus, the above total weight of the above-mentioned subtrees satisfies
\begin{equation} \label{eq3}
\sum_{i : \varphi_i < x_j} w_i \le (x_1 + \cdots + x_{j-1}) \quad\text{for each } j\in\{1,\ldots,c+1\},
\end{equation}
where we denote $x_{c+1} = +\infty$ for the sake of simplicity.

We are now interested in bounding the cost of $\algAdversarial{T}$ on the path $P$ with respect to
$\sum_{i=1}^c 2(\varphi(r)-x_i),$
that is, with respect to the cost of $\copt{T}$ on the path $P$.
To do this, we start by comparing $x_1,\ldots,x_c$ with $y_1,\ldots,y_d$ chosen by an appropriate greedy procedure:
\begin{equation}\label{eq4}
y_j = \min\big\{  y : \sum_{i : \varphi_i \le y} w_i > y_1 + \ldots + y_{j-1}\big\},
\end{equation}
\[d = \min\left\{ i : y_1 + \cdots + y_i \ge w_0+w_1+\ldots+w_m \right\},\]
in other words, assigning $y_j$ to be the first value where \eqref{eq3} is violated given only $y_1,\ldots,y_{j-1}$.
(Notice that from the definition we have that always $y_j \in \{\varphi_0,\varphi_1,\ldots,\varphi_{\lle}\}$. Moreover, if $y_j > \varphi_0$, then $y_{j+1} > y_j$.)
We obtain the following lemma which says that, across all sequences satisfying \eqref{eq3}, $y_i$ takes maximal values:
\begin{lemma} \label{lem:y-min}
It holds that $d \le c$ and $y_j \ge x_j$ for each $j\in\{1,\ldots,d\}$.
\end{lemma}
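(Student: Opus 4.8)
The plan is to prove both assertions by a single induction on $j$ that compares the greedy sequence $y_1,\ldots,y_d$ with $x_1,\ldots,x_c$ term by term, and only afterwards to deduce $d\le c$. The engine of the whole argument is an elementary observation about the two definitions: because \eqref{eq4} chooses $y_j$ as small as possible subject to $\sum_{i:\varphi_i\le y_j}w_i > y_1+\cdots+y_{j-1}$, any value strictly below $y_j$ fails this strict inequality, whereas \eqref{eq3} guarantees that the $x$'s satisfy the companion inequality $\sum_{i:\varphi_i<x_j}w_i \le x_1+\cdots+x_{j-1}$. Matching these two facts is precisely what forces $y_j\ge x_j$.

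First I would isolate the key inductive step, valid whenever $y_1,\ldots,y_j$ are defined (i.e.\ $j\le d$) and \eqref{eq3} is available at index $j$ (i.e.\ $j\le c+1$), under the hypothesis that $y_i\ge x_i$ for all $i<j$. Suppose toward a contradiction that $y_j<x_j$. Then every index $i$ with $\varphi_i\le y_j$ also satisfies $\varphi_i<x_j$, so $\{i:\varphi_i\le y_j\}\subseteq\{i:\varphi_i<x_j\}$ and hence $\sum_{i:\varphi_i\le y_j}w_i \le \sum_{i:\varphi_i<x_j}w_i$. Chaining this with \eqref{eq3} at index $j$ and with the inductive hypothesis $x_1+\cdots+x_{j-1}\le y_1+\cdots+y_{j-1}$ yields $\sum_{i:\varphi_i\le y_j}w_i \le y_1+\cdots+y_{j-1}$, which contradicts the defining strict inequality of $y_j$ in \eqref{eq4}. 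The base case $j=1$ is the same computation with empty partial sums (both equal to $0$), since \eqref{eq3} at $j=1$ gives $\sum_{i:\varphi_i<x_1}w_i\le 0$. This establishes $y_j\ge x_j$ for all $j\le\min\{c,d\}$.

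Next I would deduce $d\le c$ by contradiction. If $d>c$, then $y_1,\ldots,y_c$ are all defined and the previous step gives $y_j\ge x_j$ for every $j\le c$, whence $\sum_{j=1}^c y_j \ge \sum_{j=1}^c x_j$. Applying \eqref{eq3} at $j=c+1$ with $x_{c+1}=+\infty$ collapses its left-hand side to the full total $w_0+w_1+\cdots+w_\lle$, so that $w_0+w_1+\cdots+w_\lle \le \sum_{j=1}^c x_j \le \sum_{j=1}^c y_j$. But $d>c$ means, by minimality of $d$, that $\sum_{j=1}^c y_j < w_0+w_1+\cdots+w_\lle$, a contradiction. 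Hence $d\le c$, and then $\min\{c,d\}=d$, so the induction already delivers $y_j\ge x_j$ for all $j\in\{1,\ldots,d\}$, which finishes the proof.

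The point requiring the most care is the potential circularity: the clean bound $\sum_j y_j\ge\sum_j x_j$ needs $y_j\ge x_j$ on the full range $j\le c$, yet a priori the $y$'s are only defined up to index $d$. Proving the termwise inequality first only on the overlap $j\le\min\{c,d\}$, and then using it to rule out $d>c$, is exactly what breaks this loop. The remaining subtlety is pure bookkeeping: one must keep the strict inequality in \eqref{eq3} aligned with the non-strict one in \eqref{eq4}, so that the inclusion $\{i:\varphi_i\le y_j\}\subseteq\{i:\varphi_i<x_j\}$ holds precisely when $y_j<x_j$, and one must treat the subtree rooted at $r'$ on the same footing as the $\tree{T}{e_i}$ via the index $i=0$ (potential $\varphi_0$, weight $w_0$), so that the total appearing on the right of \eqref{eq3} at $j=c+1$ matches the threshold $w_0+w_1+\cdots+w_\lle$ defining $d$.
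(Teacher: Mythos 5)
Your proof is correct and follows essentially the same route as the paper's: an induction on $j$ showing that $y_j < x_j$ would force $\sum_{i:\varphi_i\le y_j}w_i \le \sum_{i:\varphi_i<x_j}w_i \le x_1+\cdots+x_{j-1} \le y_1+\cdots+y_{j-1}$, contradicting the defining strict inequality in \eqref{eq4}. The only difference is that you additionally spell out the deduction of $d\le c$ from the termwise comparison and \eqref{eq3} at $j=c+1$, a step the paper's proof leaves implicit; this is a welcome completion rather than a deviation.
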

\begin{proof}
We proceed by induction on $j$. 

For the inductive base, we have $x_1 = y_1 = \varphi(r')$.
For the inductive step, assume the claim holds for all indices smaller than $j$.
Suppose for a contradiction that $x_j > y_j$.
Then,
\[\sum_{i : \varphi_i \le y_j} w_i \le \sum_{i : \varphi_i < x_j} w_i\]
and by \eqref{eq3} and the inductive assumption applied for all indices smaller than $j$,
\[\sum_{i : \varphi_i < x_j} w_i\le y_1 + \ldots + y_{j-1}.\]
These two inequalities give a contradiction with $\eqref{eq4}$.
\end{proof}

Observe that $\algAdversarial{T}$ first traverses (in that order) some subset of light subtrees $\tree{T}{e_i}$, whose indices we denote by $H \subseteq \{1,\ldots,\lle\}$, in a decreasing order of their indices.
The above routes, none of which contains $r'$, will form the \emph{first part of} $\algAdversarial{T}$.
Then, all vertices of $\tree{T}{r'}$ are visited (to those routes of $\algAdversarial{T}$ we refer at the \emph{second part of} $\algAdversarial{T}$) and following that, remaining light subtrees $\tree{T}{e_i}$ for $i \in H' = \{1,\ldots,\lle\}\setminus H$, in an increasing order of their indices (\emph{third part of} $\algAdversarial{T}$).
Note that there may exist a route that has a non-empty intersection with a tree $\tree{T}{e_i}$, $i\in H\cup H'$, and also contains $r'$ --- this route belongs by definition to the second part of $\algAdversarial{T}$.

Denote by $z_1,\ldots,z_p$  the lowest potentials reached by subsequent routes in $\algAdversarial{T}$ on the path $P$ (ignoring as before the potentials they reach in subtrees), only in the first part of $\algAdversarial{T}$ in the reversed order of entering $T$:
\begin{equation} \label{eq:zi-def}
\varphi(r) \ge z_p \ge \cdots \ge z_1 > \varphi(r').
\end{equation}
We note that each route in the first part of $\algAdversarial{T}$ visits subtrees with a continuous segment of indices from $H$, that is $\tree{T}{e_i}$'s for $i \in H \cap\{i_1,\ldots, i_2\}$ for some integers $i_1,i_2$.
We, due to the weight of a light tree, its vertices belong to at most two different routes.

\begin{lemma}
\label{lem:2step}
If for some $i,j$ there is $z_i > y_j$, then $z_{i+1} \ge y_{j+1}$.
\end{lemma}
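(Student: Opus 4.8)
The plan is to reduce the statement to a single inequality about cumulative subtree weights. For a potential $y$ let $W(y)=\sum_{k\,:\,\varphi_k\le y}w_k$ denote the total weight of the light subtrees hanging from $P$ at potential at most $y$, and write $Y_j=y_1+\cdots+y_j$. With this notation the greedy rule \eqref{eq4} reads $y_{j+1}=\min\{y:W(y)>Y_j\}$, and since $W$ is nondecreasing it suffices to prove $W(z_{i+1})>Y_j$: once this holds, $z_{i+1}$ belongs to the set over which the minimum defining $y_{j+1}$ is taken, and therefore $z_{i+1}\ge y_{j+1}$.

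First I would extract from the greedy behaviour of $\algAdversarial{T}$ the identity
\[W(z_{i+1})=W(z_i)+z_i.\]
The route of $\algAdversarial{T}$ whose lowest potential on $P$ is $z_i$ descends to depth $\varphi(r)-z_i$ and returns, and on the way it explores the light subtrees hanging between potentials $z_i$ and $z_{i+1}$, of total weight $W(z_{i+1})-W(z_i)$. Since $i<p$ (which is exactly the range in which $z_{i+1}$ is defined), this is not the first route of the first part, so its budget equals $2\varphi(T)=2\varphi(r)$; moreover it cannot run out of tree, because the heavy $\tree{T}{r'}$ still lies below $z_i$, so it must spend its entire budget. Equating $2(\varphi(r)-z_i)+2\bigl(W(z_{i+1})-W(z_i)\bigr)=2\varphi(r)$ gives the displayed identity; for the argument only the inequality $W(z_{i+1})\ge W(z_i)+z_i$ is actually needed.

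Next I would feed in the hypothesis $z_i>y_j$. The minimality in \eqref{eq4} gives $W(y_j)>Y_{j-1}$, so monotonicity of $W$ together with $z_i>y_j$ yields $W(z_i)\ge W(y_j)>Y_{j-1}$. Combining this with the identity above and once more with $z_i>y_j$,
\[W(z_{i+1})=W(z_i)+z_i>Y_{j-1}+y_j=Y_j,\]
which is precisely the inequality isolated in the first paragraph; hence $z_{i+1}\ge y_{j+1}$.

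The comparison with the greedy sequence is then immediate, so I expect the main obstacle to be the DFS identity $W(z_{i+1})=W(z_i)+z_i$. Two points need care: arguing that every route of the first part except the first one exhausts its full budget $2\varphi(T)$ (using that unexplored heavy weight always remains below the current depth, so a route never stops for lack of tree), and handling the single light subtree that may straddle the potential $z_{i+1}$ and thus be split between two consecutive routes, as foreseen by the remark that the vertices of a light subtree lie on at most two routes. Treating the subtrees as continuously splittable makes $W$ effectively continuous and removes this boundary ambiguity, after which the weight bookkeeping is routine.
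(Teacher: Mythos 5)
Your argument is the paper's proof repackaged: the cumulative-weight inequality $W(z_{i+1})\ge W(z_i)+z_i$ combined with the greedy minimality defining $y_{j+1}$ is exactly the paper's step that route $i$, whose length inside the light subtrees is at least $2z_i>2y_j>2(w_{a+1}+\cdots+w_{b-1})$, must already touch a subtree hanging at potential $y_{j+1}=\varphi_b$, which forces $z_{i+1}\ge y_{j+1}$. The crux you flag as the main obstacle --- that every non-first route of the first part effectively exhausts its budget $2\varphi(T)$ --- is precisely the assertion the paper itself relies on (stated there as ``the length of the $i$-th route restricted to the subtrees is at least $2z_i$''), and your handling of the straddling subtrees matches the paper's parenthetical remark about routes avoiding the subtrees indexed by $H'$, so your proof stands or falls exactly where the paper's does.
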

\begin{proof} 
If $y_{j+1}  = y_j $ then the claim follows immediately from the fact that, by~\eqref{eq:zi-def}, $z_{i+1} \geq z_i \geq y_j = y_{j+1}$.
Similarly, if $z_i \ge y_{j+1}$ then 
 by~\eqref{eq:zi-def} we have $z_{i+1} \geq z_i \ge y_{j+1}$.
Thus, assume that $y_{j+1} \ge z_i$ and $y_{j+1} > y_j$. 

Let $a>b$ be indices such that $y_j = \varphi_a$ and $y_{j+1}= \varphi_b$.
We have from the way $y_j$ is selected in \eqref{eq4}:
\[w_0 + \cdots + w_b > y_1 + \cdots + y_j \ge w_0 + \cdots + w_{b-1},\]
\[w_0 + \cdots + w_a > y_1 + \cdots + y_{j-1} \ge w_0 + \cdots + w_{a-1}.\]
Thus,
\[y_j > w_{a+1} + \cdots + w_{b-1}.\]
This inequality, informally speaking, certifies that the total weight of all subtrees $\tree{T}{e_s}$ with $s\in\{a+1,\ldots,b-1\}$ is smaller than $y_j$.
By assumption $z_i>y_j$.
By the definition of the sequence $z_1,\ldots,z_p$, the length of the $i$-th route in $\algAdversarial{T}$ restricted to the subtrees $\tree{T}{e_s}$ is at least $2z_i$.
(Note that we are not using the fact that this route may avoid some subtrees $\tree{T}{e_s}$ with $s\in\{a+1,\ldots,b-1\}$ as we analyze the first part of $\algAdversarial{T}$ which `avoids' each subtree $\tree{T}{e_s}$ with $s\in\{a+1,\ldots,b-1\}\cap H'$.)
Thus, the $i$-th route of $\algAdversarial{T}$ visits the node of $P$ at potential $\varphi_b$ and hence $z_{i+1} \ge \varphi_b = y_{j+1}$ as required in the lemma.
\end{proof}

We are now ready to bound the total cost of $\algAdversarial{T}$ with relation to $\copt{T}$. 
The cost of $\copt{T}$ can be decomposed:
\begin{equation} \label{eq:opt-decomposed}
\cost{\copt{T}}=O_{light} + O_{deep} + O_{path} + O_{flat},
\end{equation}
where:
\begin{description}
 \item $O_{light}$ --- is the cost restricted to light subtrees $\tree{T}{e_i}$ with $i\in\{1,\ldots,\lle\}$,
 \item $O_{deep}$  --- is the cost restricted to the subtree $\tree{T}{r'}$,
 \item $O_{path}$  --- is the cost restricted to the path $P$ and the routes that do not contain $r'$, and
 \item $O_{flat}$  --- is the cost restricted to the path $P$ and routes that do contain $r'$.
\end{description}
Similarly, we express the cost of $\algAdversarial{T}$ as a sum:
\begin{equation} \label{eq:dfs-decomposed}
\cost{\algAdversarial{T}} = D_{light} + D_{deep} + D_{desc} + D_{flat} + D_{asc},
\end{equation}
where
\begin{description}
 \item $D_{light}$ --- is the cost of $\algAdversarial{T}$ restricted to light subtrees $\tree{T}{e_i}$ with $i\in\{1,\ldots,\lle\}$,
 \item $D_{deep}$ --- is the cost restricted to the subtree $\tree{T}{r'}$,
 \item $D_{desc}$ --- is the cost restricted to the path $P$ in the first part of $\algAdversarial{T}$,
 \item $D_{flat}$ --- is the cost restricted to $P$ and the routes that contain $r'$ (i.e., the second part of $\algAdversarial{T}$), and
 \item $D_{asc}$ --- is the cost restricted to the path $P$  in the third part of $\algAdversarial{T}$.
\end{description}

By Lemma~\ref{lem:light}, $\cost{\algAdversarial{\tree{T}{e_i}}} < 2 \cost{\copt{\tree{T}{e_i}}}$ for each $i\in\{1,\ldots,\lle\}$ and therefore
\begin{equation} \label{eq:Dlight}
D_{light} < 2\cdot O_{light}.
\end{equation}

Denote by $s$ the smallest index such that $y_{s+1} > \varphi_0 = y_s$ and let $H=\{j_1,\ldots,j_q\}$, $j_1<j_2<\cdots<j_q$.
Recall that $H$ is the set of indices $i$ such that $\tree{T}{e_i}$ is covered in the first part of $\algAdversarial{T}$.
Since $z_{j_1}>y_s$, by iteratively applying Lemma~\ref{lem:2step} we obtain that
\[z_{j_i} \geq y_{s+i-1} \textup{ for each }i\in\{1,\ldots,q\}.\]
Therefore, by Lemma~\ref{lem:y-min},
\begin{eqnarray} \label{eq:Ddesc}
\begin{aligned}
D_{desc} & =  2\sum_{i=1}^q(\varphi(T)-z_{j_i}) \leq 2\sum_{i=1}^q (\varphi(T)-y_{s+i-1}) \\
         & \leq  2(\varphi(T) - \varphi_0) + 2\sum_{i=2}^q (\varphi(T)-x_{s+i-1}) \leq 2\omega(P) + O_{path}.
\end{aligned}
\end{eqnarray}
By an analogous analysis, the same bound holds for the third part of $\algAdversarial{T}$: 
\begin{equation} \label{eq:Dasc}
D_{asc} \le 2\omega(P) + O_{path}.
\end{equation}
By the inductive assumption we have
\begin{equation} \label{eq:Ddeep}
D_{deep} < 10 \cdot O_{deep} - 16 \varphi(r').
\end{equation}
We also get the following bounds by analyzing how much each particular route can overlap with $\tree{T}{r'}$.
The first one follows from an observation that each route having a non-empty intersection with $\tree{T}{r'}$ may have length restricted to $\tree{T}{r'}$ at most $2\varphi(r')$.
Thus, there exist at least $O_{deep}/(2\varphi(r'))$ such routes in $\algAdversarial{T}$ intersecting $\tree{T}{r'}$ and each such route contributes at least $2\omega(P)$ to $O_{flat}$.
Hence,
\begin{equation} \label{eq:Oflat}
O_{flat} \ge \frac{\omega(P)}{\varphi(r')} \cdot O_{deep}.
\end{equation}
As for an upper bound, there exist at most $\lceil D_{deep}/(2\varphi(r'))\rceil+1$ routes in $\pdfs{T}$ that contain $r'$ (note that the first such route may include no other vertices except for $r'$ from $\tree{T}{r'}$).
Thus,
\begin{equation} \label{eq:Dflat}
D_{flat} \le 2\omega(P) \cdot \lceil D_{deep}/(2\varphi(r')) + 1 \rceil \le 4 \omega(P) + \omega(P) \cdot D_{deep}/\varphi(r').
\end{equation}
Equations \eqref{eq:Dflat}, \eqref{eq:Ddeep} and \eqref{eq:Oflat}, used in that order, give us
\begin{eqnarray*}
D_{deep} + D_{flat} & \leq &  4\omega(P) +  \left(\frac{\omega(P)}{\varphi(r')}+1\right)D_{deep} \\
                    & \leq & 4\omega(P) + \left(\frac{\omega(P)}{\varphi(r')}+1\right)(10\cdot O_{deep} - 16\varphi(r')) \\
                    & \leq & 4\omega(P) - 16\left(\omega(P)+\varphi(r')\right) + 10\cdot O_{deep} \left(\frac{\omega(P)}{\varphi(r')}+1\right) \\
                    & \leq & 10 O_{deep} + 10 O_{flat} - 12 \omega(P) - 16\varphi(r').
\end{eqnarray*}
Then, by~\eqref{eq:Dlight}, \eqref{eq:Ddesc} and \eqref{eq:Dasc} we have
\[D_{light} + D_{asc} + D_{desc} \le 2O_{light} + 4\omega(P) + 2 O_{path}.\]
The last two inequalities,~\eqref{eq:opt-decomposed},~\eqref{eq:dfs-decomposed} and $\varphi(r)=\omega(P)+\varphi(r')$ finally give
\begin{eqnarray}
\cost{\algAdversarial{T}} & \le & 10 O_{deep} + 10 O_{flat} + 2O_{light} + 2O_{path} - 8\omega(P) - 16\varphi(r') \nonumber \\ 
         & \le & 10 \cost{\copt{T}} - 8 \varphi(r), \nonumber
\end{eqnarray}
which completes the inductive proof of Theorem~\ref{thm:induction}.

\section{Conclusions and open problems}

Our strategy $\pdfs{T}$ achieves the asymptotically minimum number of routes and also minimizes the cost up to a small constant. In particular, we provided an upper bound of $10$ for the competitiveness of any online piecemeal exploration strategy. A trivial lower bound for the same problem is $3/2$ (Consider the tree with three branches of lengths $B/2$, $B/2$ and $B$, respectively, starting from the root: any online algorithm may cover the tree with 3 routes, while the optimal is 2 routes). This leaves a gap between the lower and upper bounds and the interesting open question is whether the strategy $\pdfs{T}$ is the best possible algorithm? Another open problem is to analyze similar strategies in other, more general, classes of graphs instead of trees.

\bibliography{batteries}

\begin{thebibliography}{10}

\bibitem{AmbuhlGPRZ11}
Christoph Amb{\"{u}}hl, Leszek G{\k a}sieniec, Andrzej Pelc, Tomasz Radzik, and
  Xiaohui Zhang.
\newblock Tree exploration with logarithmic memory.
\newblock {\em {ACM} Trans. Algorithms}, 7(2):17:1--17:21, 2011.

\bibitem{AnayaCCLPV12}
Julian Anaya, J{\'{e}}r{\'{e}}mie Chalopin, Jurek Czyzowicz, Arnaud Labourel,
  Andrzej Pelc, and Yann Vax{\`{e}}s.
\newblock Collecting information by power-aware mobile agents.
\newblock In {\em Proceedings of Distributed Computing - 26th International
  Symposium, {DISC} 2012}, pages 46--60.

\bibitem{AnayaCCLPV16}
Julian Anaya, J{\'{e}}r{\'{e}}mie Chalopin, Jurek Czyzowicz, Arnaud Labourel,
  Andrzej Pelc, and Yann Vax{\`{e}}s.
\newblock Convergecast and broadcast by power-aware mobile agents.
\newblock {\em Algorithmica}, 74(1):117--155, 2016.

\bibitem{Arora98}
Sanjeev Arora.
\newblock Polynomial time approximation schemes for euclidean traveling
  salesman and other geometric problems.
\newblock {\em J. {ACM}}, 45(5):753--782, 1998.

\bibitem{AwerbuchBRS99}
Baruch Awerbuch, Margrit Betke, Ronald~L. Rivest, and Mona Singh.
\newblock Piecemeal graph exploration by a mobile robot.
\newblock {\em Inf. Comput.}, 152(2):155--172, 1999.

\bibitem{AwerbuchK98}
Baruch Awerbuch and Stephen~G. Kobourov.
\newblock Polylogarithmic-overhead piecemeal graph exploration.
\newblock In {\em Proceedings of the Eleventh Annual Conference on
  Computational Learning Theory, {COLT} 1998}, pages 280--286.

\bibitem{BartschiC0DGGLM16}
Andreas B{\"{a}}rtschi, J{\'{e}}r{\'{e}}mie Chalopin, Shantanu Das, Yann
  Disser, Barbara Geissmann, Daniel Graf, Arnaud Labourel, and Mat{\'{u}}s
  Mihal{\'{a}}k.
\newblock Collaborative delivery with energy-constrained mobile robots.
\newblock In {\em Proceedings of Structural Information and Communication
  Complexity - 23rd International Colloquium, {SIROCCO} 2016}, pages 258--274.

\bibitem{BetkeRS95}
Margrit Betke, Ronald~L. Rivest, and Mona Singh.
\newblock Piecemeal learning of an unknown environment.
\newblock {\em Machine Learning}, 18(2-3):231--254, 1995.

\bibitem{BrassCGX11}
Peter Brass, Flavio Cabrera{-}Mora, Andrea Gasparri, and Jizhong Xiao.
\newblock Multirobot tree and graph exploration.
\newblock {\em {IEEE} Trans. Robotics}, 27(4):707--717, 2011.

\bibitem{CzyzowiczDMR16}
Jurek Czyzowicz, Krzysztof Diks, Jean Moussi, and Wojciech Rytter.
\newblock Communication problems for mobile agents exchanging energy.
\newblock In {\em Proceedings of Structural Information and Communication
  Complexity - 23rd International Colloquium, {SIROCCO} 2016}, pages 275--288.

\bibitem{DDK15}
Shantanu Das, Dariusz Dereniowski, and Christina Karousatou.
\newblock Collaborative exploration by energy-constrained mobile robots.
\newblock In {\em Proceedings of Structural Information and Communication
  Complexity - 22nd International Colloquium, {SIROCCO} 2015}, pages 357--369.

\bibitem{DisserHK16}
Yann Disser, Jan Hackfeld, and Max Klimm.
\newblock Undirected graph exploration with {$\Theta(\log \log N)$} pebbles.
\newblock In {\em Proceedings of the Twenty-seventh Annual ACM-SIAM Symposium
  on Discrete Algorithms, {SODA} 2016}, pages 25--39.

\bibitem{DisserMNSS16}
Yann Disser, Frank Mousset, Andreas Noever, Nemanja Skoric, and Angelika
  Steger.
\newblock A general lower bound for collaborative tree exploration.
\newblock {\em CoRR}, abs/1610.01753, 2016.

\bibitem{DobrevKM12}
Stefan Dobrev, Rastislav Kr{\'{a}}lovic, and Euripides Markou.
\newblock Online graph exploration with advice.
\newblock In {\em Proceedings of Structural Information and Communication
  Complexity - 19th International Colloquium, {SIROCCO} 2012}, pages 267--278.

\bibitem{DuncanKK06}
Christian~A. Duncan, Stephen~G. Kobourov, and V.~S.~Anil Kumar.
\newblock Optimal constrained graph exploration.
\newblock {\em {ACM} Trans. Algorithms}, 2(3):380--402, 2006.

\bibitem{DyniaKS06}
Mirosław Dynia, Mirosław Korzeniowski, and Christian Schindelhauer.
\newblock Power-aware collective tree exploration.
\newblock In {\em Proceedings of Architecture of Computing Systems - {ARCS}
  2006, 19th International Conference}, pages 341--351.

\bibitem{DyniaKHS06}
Mirosław Dynia, Jarosław Kutyłowski, Friedhelm {Meyer auf der Heide}, and
  Christian Schindelhauer.
\newblock Smart robot teams exploring sparse trees.
\newblock In {\em Proceedings of Mathematical Foundations of Computer Science,
  31st International Symposium, {MFCS} 2006}, pages 327--338.

\bibitem{DyniaLS07}
Mirosław Dynia, Jakub Łopuszanski, and Christian Schindelhauer.
\newblock Why robots need maps.
\newblock In {\em Proceedings of Structural Information and Communication
  Complexity, 14th International Colloquium, {SIROCCO} 2007}, pages 41--50.

\bibitem{FraigniaudGKP06}
Pierre Fraigniaud, Leszek G{\k a}sieniec, Dariusz~R. Kowalski, and Andrzej
  Pelc.
\newblock Collective tree exploration.
\newblock {\em Networks}, 48(3):166--177, 2006.

\bibitem{FredericksonHK78}
Greg~N. Frederickson, Matthew~S. Hecht, and Chul~E. Kim.
\newblock Approximation algorithms for some routing problems.
\newblock {\em {SIAM} J. Comput.}, 7(2):178--193, 1978.

\bibitem{HigashikawaKLT14}
Yuya Higashikawa, Naoki Katoh, Stefan Langerman, and Shin{-}ichi Tanigawa.
\newblock Online graph exploration algorithms for cycles and trees by multiple
  searchers.
\newblock {\em J. Comb. Optim.}, 28(2):480--495, 2014.

\bibitem{MegowMS12}
Nicole Megow, Kurt Mehlhorn, and Pascal Schweitzer.
\newblock Online graph exploration: New results on old and new algorithms.
\newblock {\em Theor. Comput. Sci.}, 463:62--72, 2012.

\bibitem{OrtolfS14}
Christian Ortolf and Christian Schindelhauer.
\newblock A recursive approach to multi-robot exploration of trees.
\newblock In {\em Proceedings of Structural Information and Communication
  Complexity - 21st International Colloquium, {SIROCCO} 2014}, pages 343--354.

\bibitem{PanaiteP99}
Petrisor Panaite and Andrzej Pelc.
\newblock Exploring unknown undirected graphs.
\newblock {\em J. Algorithms}, 33(2):281--295, 1999.

\end{thebibliography}

\appendix
\begin{center}
\textbf{\Large Appendix}
\end{center}

\section{Tree rearrangement} 
\label{sec:rearrangement-appendix}

This section is devoted to proving Lemma~\ref{lem:perturbe}. 
We start with an informal description providing a high level intuition that gives an overview of this section.
Our first step (Section~\ref{sec:Tprime-construction}) is to construct a tree $\Tprime{\varepsilon}$, $\varepsilon\in\mathbb{R}^+$, based on $T$  such that $\Tprime{\varepsilon}$ satisfies property~\ref{it:property}.
We will also need that $\Tprime{\varepsilon}$ `resembles' $T$ in the following way: a search strategy is valid for $T$ if and only if a `very similar' strategy is valid for $\Tprime{\varepsilon}$.
To simplify the statements considerably, it will be convenient to encode strategies in an uniform way so that we can apply the same strategy for both trees, without going into the details of tedious but straightforward conversions between strategy for $T$ and strategy for $\Tprime{\varepsilon}$.
We thus define (Section~\ref{sec:Tprime-epsilon}) a collection $\cC$ of all possible strategies (including adversarial ones and those that are not feasible for either $T$ or $\Tprime{\varepsilon}$ because they contain routes that are too long or do not visit all vertices).
Then in Section~\ref{sec:Tprime-analysis} we select the right value of $\varepsilon$.
The value of $\varepsilon$ and the construction of $\Tprime{\varepsilon}$ will ensure that a strategy in $\cC$ is valid for $T$ if and only if it is valid for $\Tprime{\varepsilon}$.
We then finally provide the main result of this section (Lemma~\ref{lem:perturbe})  states that, again thanks to the choice of $\varepsilon$, if Theorem~\ref{thm:induction} holds for $\Tprime{\varepsilon}$, then it holds for $T$, thus allowing us to restrict only to trees satisfying property~\ref{it:property}.

\subsection{The construction of $\Tprime{\varepsilon}$} \label{sec:Tprime-construction}
We now construct the tree $\Tprime{\varepsilon}=(V(T)\cup X,E(\Tprime{\varepsilon}),\omega')$ based on $T=(V(T),E(T),\omega)$ and the construction depends on a parameter $\varepsilon>0$ that will be fixed later.
We now impose only a condition on $\varepsilon$ that is needed for the construction itself to be valid:
\[\varepsilon < \min\{\omega(e)\myst e\in E(T)\}.\]
Select an arbitrary vertex $u$ in $T$ with $\hdeg(u)=1$ (denote by $\{u,v\}$ the heavy downward edge at $u$) and $d>1$ light downward edges $e_1,\ldots,e_d$ at $v$.
Subdivide the edge $e$ (see Figure~\ref{fig:perturbe} for an illustration) by replacing it by a path $P=(u=v_0,v_1,\ldots,v_{d-1},v_d=v)$ with the following edge lengths: the first $d-1$ edges have length $\varepsilon/(d-1)$, i.e., $\omega'(v_i,v_{i+1})=\varepsilon/(d-1)$ for each $i\in\{0,\ldots,d-2\}$, and for the last edge we set $\omega'(\{v_{d-1},v\})=\omega(\{u,v\})-\varepsilon$.
(Note that this preserves the distance between $u$ and $v$.)
Then, the weight of each edge $e_i$ decreases by $\varepsilon$, $\omega'(e_i)=\omega(e_i)-\varepsilon$ and the higher endpoint of $e_i$ in $\Tprime{\varepsilon}$ becomes $v_{i-1}$, $i\in\{1,\ldots,d\}$.
(Note that this ensures that the distance between $u$ and the lower endpoint of $e_i$ or the distance between two children of $u$ is not greater in $\Tprime{\varepsilon}$ than in $T$.)
This construction allows us to assume (by permuting the edges $e_1,\ldots,e_d$ appropriately) that the DFS traversal of $\Tprime{\varepsilon}$ visits the edges $e$ and $e_1,\ldots,e_d$ in the same order both in $T$ and in $\Tprime{\varepsilon}$, ensuring that Condition~\ref{it:P2} is satisfied.
\begin{figure}[htb]
\begin{center}
\includegraphics[scale=0.6]{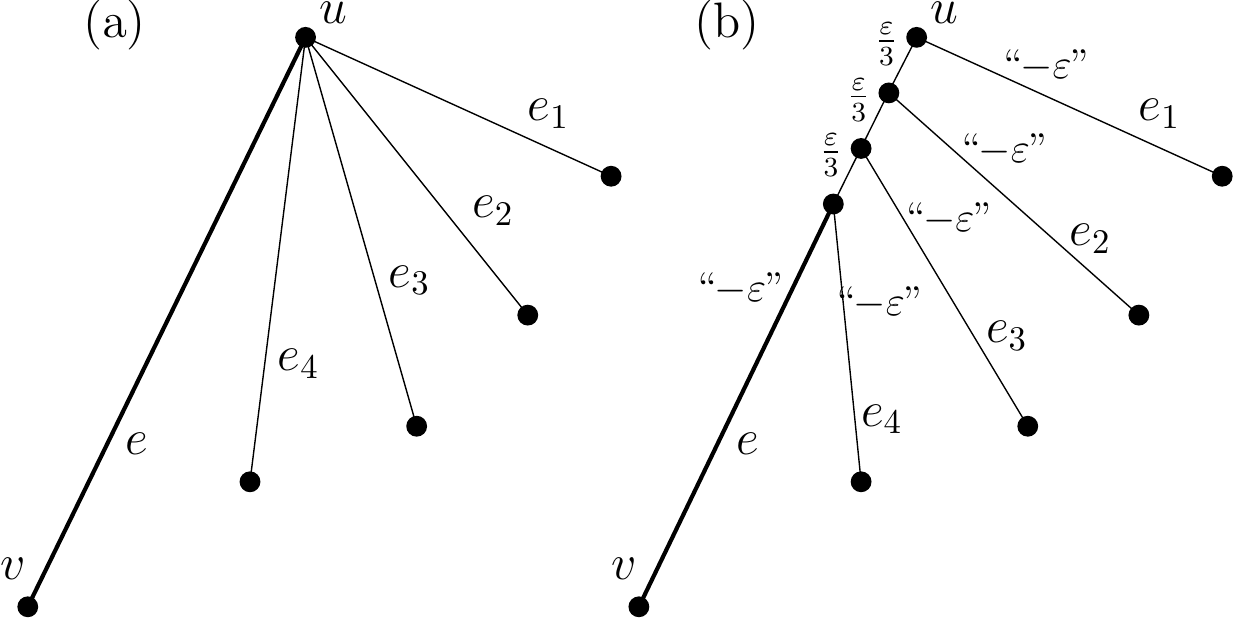}
\caption{(a) the node $u$ of $T$ with $\hdeg(u)=1$ and light downward edges $e_1,\ldots,e_d$, $d=4$;
         (b) the corresponding edges in $\Tprime{\varepsilon}$ (the ``$-\varepsilon$'' indicated that the weight of the edge is $\varepsilon$ smaller than the weight of the corresponding edge in $T$)}
\label{fig:perturbe}
\end{center}
\end{figure}
Since the vertex $v$ is selected arbitrarily, we repeat the above modification for each such vertex $v$ obtaining the final tree $\Tprime{\varepsilon}$.

\medskip
We will require the following conditions to be satisfied:
\begin{enumerate} [label={\normalfont{(P\arabic*)}},leftmargin=30pt]
 \item\label{it:P1} there are at most two downward edges at each vertex in $\Tprime{\varepsilon}$ with $\hdeg(v)=1$,
 \item\label{it:P2} there exists a DFS traversal of $\Tprime{\varepsilon}$ that visits the vertices in $V(T)$ in the same order as the DFS traversal that we have fixed for $T$ in this work,
 \item\label{it:P3} $\cS\in\cC$ is feasible for $T$ if and only if $\cS$ is feasible for $\Tprime{\varepsilon}$.
\end{enumerate}
Property~\ref{it:P3} will be proved in Lemma~\ref{lem:classEps} and we now note:
\begin{observation} \label{obs:P1P2}
For each $\varepsilon>0$, the tree $\Tprime{\varepsilon}$ satisfies Conditions~\ref{it:P1} and~\ref{it:P2}.
\qed
\end{observation}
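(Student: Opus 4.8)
The plan is to verify the two conditions by inspecting the construction locally, one application of the subdivision at a time, and then to observe that distinct applications interfere only trivially because they subdivide distinct heavy edges. Fix one application and, following Figure~\ref{fig:perturbe}, write $u$ for the processed vertex (so $\hdeg(u)=1$ and $u$ has $d>1$ light downward edges $e_1,\ldots,e_d$ in $T$), write $\{u,v\}$ for its heavy downward edge, and $P=(u=v_0,v_1,\ldots,v_d=v)$ for the subdividing path, the edge $e_i$ being reattached so that its higher endpoint becomes $v_{i-1}$.

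First I would settle Condition~\ref{it:P1}. The only vertices whose downward incidences change are $u=v_0$, which retains the single light edge $e_1$ together with the heavy path edge to $v_1$, and the new vertices $v_1,\ldots,v_{d-1}$, each of which carries exactly one light edge $e_{i+1}$ together with one downward path edge. Thus every one of these vertices has exactly two downward edges, so the implication in~\ref{it:P1} holds for them regardless of which edge is heavy. For this bookkeeping to be sound I must check that the reattached edges stay light, which follows from $\omega'(\tree{\Tprime{\varepsilon}}{e_i})=\omega(\tree{T}{e_i})-\varepsilon$ together with $\varphi(v_{i-1})=\varphi(u)-(i-1)\varepsilon/(d-1)\ge\varphi(u)-\varepsilon\ge\omega(\tree{T}{e_i})-\varepsilon$, using that $e_i$ was light at $u$. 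Every untouched vertex either has heavy degree different from $1$, making~\ref{it:P1} vacuous, or already carried at most one light edge (otherwise it too would have been selected); since the construction is repeated until no heavy-degree-one vertex carries more than one light edge, no offender survives. The only delicate point is that the weight changes must not lower some untouched vertex from heavy degree $\ge 2$ to heavy degree $1$ while leaving several light edges incident to it; but the construction only decreases subtree weights, so heavy edges can only become light and heavy degrees can only drop, and choosing $\varepsilon$ small enough to preserve every strict inequality $\omega(T')\neq\varphi(T')$ rules this case out.

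Next I would prove Condition~\ref{it:P2}. Let the fixed DFS of $T$ traverse the downward edges of $u$ in the order $c_1,\ldots,c_{d+1}$, where exactly one entry $c_t$ is the heavy edge $\{u,v\}$ and the other $d$ entries are the light edges. In $\Tprime{\varepsilon}$, a DFS descending $P$ may, at each $v_{i-1}$, explore the subtree hanging there either on the way down or on the way up; hence the light subtrees explored during the descent appear in order of increasing depth along $P$, the subtree $\tree{\Tprime{\varepsilon}}{v}$ appears at the bottom, and the light subtrees explored during the ascent appear in order of decreasing depth. The realizable orders are therefore exactly those of the shape ``some lights increasing, then $\tree{\Tprime{\varepsilon}}{v}$, then the remaining lights decreasing''. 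I would now use the freedom to permute the labels $e_1,\ldots,e_d$, reserved by the construction precisely for this purpose, placing the lights that precede $c_t$ in the $T$-order at the shallow vertices of $P$ and exploring them on the descent (reproducing $c_1,\ldots,c_{t-1}$), and placing the lights that follow $c_t$ at the deep vertices of $P$ and exploring them on the ascent in reverse (reproducing $c_{t+1},\ldots,c_{d+1}$). The resulting traversal visits the edges at $u$, and hence the vertices of $V(T)$ lying below $u$, in exactly the $T$-order, while the auxiliary vertices $v_1,\ldots,v_{d-1}\in X$ are simply ignored.

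Finally I would assemble the local matchings. Because distinct applications subdivide pairwise distinct heavy edges and leave each rooted subtree hanging off a processed vertex intact up to the small weight perturbation, the orders fixed at the individual vertices are mutually compatible and compose into a single DFS of $\Tprime{\varepsilon}$ agreeing with the fixed DFS of $T$ on $V(T)$, which is~\ref{it:P2}. I expect the real obstacle to sit in~\ref{it:P2}: recognising that distributing the light edges to distinct depths restricts the admissible DFS orders to the ``descent--$v$--ascent'' shape, and that the position of the heavy edge in the $T$-order partitions the light edges into exactly the two groups this shape accommodates. By comparison~\ref{it:P1} is essentially accounting, once lightness of the $e_i$ and the control of heavy degrees under the perturbation are checked.
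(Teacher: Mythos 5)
Your proof is correct, but note that the paper offers no argument at all for this observation: it is stated with an immediate \qed{} as a consequence of the construction, the only hint being the remark in Section~\ref{sec:Tprime-construction} that one may permute $e_1,\ldots,e_d$ so that the DFS of $\Tprime{\varepsilon}$ visits these edges in the prescribed order. Your treatment of~\ref{it:P2} makes that remark precise in exactly the intended way: distributing the light subtrees along the subdividing path forces any DFS to realize an order of the shape ``some lights on the descent, then $\tree{\Tprime{\varepsilon}}{v}$, then the rest on the ascent,'' and the free permutation of labels lets you match the two groups to the prefix and (reversed) suffix of the fixed $T$-order around the heavy edge; since the inserted vertices lie in $X$ and not in $V(T)$, the restriction to $V(T)$ is unchanged. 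Your accounting for~\ref{it:P1} at the processed and new vertices is likewise the intended (trivial) argument.

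The one place where you go beyond the paper is also the most valuable part of your write-up: you correctly observe that the construction decreases subtree weights (by multiples of $\varepsilon$) while leaving the potentials of untouched original vertices unchanged, so an unprocessed vertex with $\hdeg=2$ and several light downward edges could in principle drop to heavy degree $1$ and violate~\ref{it:P1}. This means the observation, read literally ``for each $\varepsilon>0$,'' is slightly too strong; it is true either under your fix (take $\varepsilon$ small enough that every strict inequality $\omega(T')\lessgtr\varphi(T')$ is preserved, which is harmless since the paper only ever uses $\varepsilon$ from the small interval $\classEps$ of Lemma~\ref{lem:classEps}) or under an iterated reading of ``repeat the modification for each such vertex.'' Flagging and patching this is a genuine improvement over the paper's bare assertion, not a gap in your argument.
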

\begin{observation} \label{obs:P2ADFS}
The strategies $\algAdversarial{T}$ and $\algAdversarial{\Tprime{\varepsilon}}$ visit the nodes in $V(T)$ in the same order.
\qed
\end{observation}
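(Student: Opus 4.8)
The plan is to reduce the entire statement to the underlying depth first search traversals and to exploit the fact that the route-cutting that defines an adversarial exploration never reorders the walk it is applied to. Recall that $\algAdversarial{T}$ is obtained by taking the fixed traversal $\DFSroute$ and breaking it into consecutive pieces of bounded length, where each route returns to the root, jumps directly back to the point where progress previously stopped, and then resumes along $\DFSroute$. The key point I would stress is that the subsequences that make progress, read in the order of the routes, concatenate to give exactly $\DFSroute$; hence the order in which the vertices of $T$ are discovered by $\algAdversarial{T}$ coincides with their order of appearance in $\DFSroute$, independently of the edge weights, of $B$, and of the adversarial parameter $B'$.

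First I would record the analogue for $\Tprime{\varepsilon}$. By Observation~\ref{obs:P1P2} (Condition~\ref{it:P2}) there is a DFS traversal of $\Tprime{\varepsilon}$ whose restriction to $V(T)$ lists those vertices in exactly the same order as $\DFSroute$. Fixing this traversal as the one defining $\algAdversarial{\Tprime{\varepsilon}}$, the same ``cutting preserves order'' argument shows that $\algAdversarial{\Tprime{\varepsilon}}$ discovers the vertices of $\Tprime{\varepsilon}$ in the order of that traversal.

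The remaining step is to discard the auxiliary vertices introduced by the construction. The added set $X$ consists solely of the degree-two subdivision points inserted along heavy edges; along any DFS traversal of $\Tprime{\varepsilon}$ each such vertex is merely passed through between its two neighbours and causes no branching. Deleting from the traversal of $\Tprime{\varepsilon}$ all occurrences of vertices in $X$ therefore leaves the $V(T)$-vertices in precisely the order guaranteed by Condition~\ref{it:P2}, namely the order of $\DFSroute$. Combining this with the first paragraph yields that both $\algAdversarial{T}$ and $\algAdversarial{\Tprime{\varepsilon}}$ visit the vertices of $V(T)$ in the order induced by $\DFSroute$, which is the assertion.

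The only point that requires genuine care is the claim that the visiting order is insensitive to the choice of $B'$ and to the differing edge weights of $T$ and $\Tprime{\varepsilon}$. I would make this explicit by observing that $B'$ and the weights influence only \emph{where} the traversal is cut into routes, i.e.\ the partition of the walk, and not the linear order of the walk itself; since the observation concerns only the order of visits and not the assignment of vertices to routes, this dependence is harmless. This is the step where a careful reader is most likely to object, so I would state it as the crux of the argument rather than leaving it implicit.
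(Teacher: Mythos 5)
Your argument is essentially the paper's own (the paper states this observation with no proof at all, treating it as immediate from the construction): the adversarial exploration is just the fixed DFS traversal cut into routes, cutting never reorders the walk, and Condition~\ref{it:P2} guarantees the two underlying traversals agree on $V(T)$; your explicit remark that $B'$ and the weights affect only where the cuts fall, not the order of first visits, is exactly the right crux. One correction, though: your third paragraph misdescribes the construction. The vertices of $X$ are \emph{not} degree-two passthrough points --- the whole purpose of subdividing the heavy edge is to reattach the light edges $e_1,\ldots,e_d$ so that each new vertex $v_i$ carries one of them (to enforce Property~\ref{it:property}), so these vertices have degree three and a DFS does branch at them. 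This error is not fatal, because the conclusion you draw from it (that deleting the $X$-vertices from the traversal of $\Tprime{\varepsilon}$ leaves the $V(T)$-vertices in the order of $\DFSroute$) is precisely what Condition~\ref{it:P2} already gives you, and you invoked that condition correctly in your second paragraph; the degree-two justification should simply be dropped in favour of citing~\ref{it:P2} directly.
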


\subsection{Finding the right value of $\varepsilon$} \label{sec:Tprime-epsilon}

We define a \emph{potential route} as a following pair: $R'=(L,v)$, where $L=(l_1,\ldots,l_p)$ is a sequence of leaves and $v\in V(\Tprime{\varepsilon})$.
Then, $R'$ \emph{translates} to a route $R$ in $T$ as a concatenation of the following paths (in this order): the path from $r$ to $l_1$, the path from $l_i$ to $l_{i+1}$, $i=1,\ldots,p-1$, the path from $l_p$ to the closest ancestor $x$ of $v$ that belongs to $V(T)$ and finally the path from $x$ to $r$.
The length of $R$ is
\[\routelen{R}=d(r,l_1)+d(l_p,x)+d(x,r)+\sum_{1\leq i<p}d(l_i,l_{i+1}).\]
$R'$ translates to a route in $\Tprime{\varepsilon}$ in the same way, except that take $x=v$, i.e., $v$ is not replaced by the ancestor.
Then, a \emph{potential strategy} is a sequence consisting of at most $j$ potential routes, $j\in\{1,\ldots,p\}$, where $p$ is the number of leaves in $T$.

Note that a potential strategy may not translate to a valid $B$-exploration strategy for $T$ or $\Tprime{\varepsilon}$ because some nodes may not be explored and some routes may be too long.
A potential strategy is \emph{feasible} for $T$ (respectively $\Tprime{\varepsilon}$) if it translates to a valid $B$-exploration strategy for $T$ (respectively $\Tprime{\varepsilon}$).
We denote by $\cC$ a collection of all potential strategies.
Clearly, the size of $\cC$ is finite.

\medskip
We conclude with the following:
\begin{observation} \label{obs:potential-strategies}
For any route $R$ that may appear in $\algAdversarial{T}$, $\copt{T}$, $\algAdversarial{\Tprime{\varepsilon}}$ and $\copt{\Tprime{\varepsilon}}$ there exists a potential route that translates to $R$.
$\qed$
\end{observation}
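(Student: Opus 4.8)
The plan is to associate with each route $R$ a potential route $(L,v)$ by simply reading off the leaves $l_1,\ldots,l_p$ that $R$ visits, in the order in which it visits them, together with the vertex $v$ at which $R$'s last descending excursion ends, and then to check that $R$ really is the walk that $(L,v)$ translates to. Since subdividing edges only creates vertices of degree two, the leaves of $\Tprime{\varepsilon}$ are exactly those of $T$, so $L$ is always a legal leaf sequence; the whole content of the claim is therefore that $R$ joins its successive waypoints $r,l_1,\ldots,l_p,v,r$ by shortest (tree) paths, as that is precisely what $(L,v)$ translates to.

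For the two adversarial DFS strategies I would read this structure straight off the definition of $\algAdversarial{\cdot}$. The one fact I would isolate first is that the left-hand side of the budget constraint~\eqref{eq:route-len} stays constant while the traversal backtracks and strictly increases only while it descends; hence a route can run out of budget only in the middle of a descent. From this I would conclude that both the resume vertex $v_{j_{i-1}}$ and the stopping vertex $v_{j_i}$ lie on a downward path toward the next leaf, so they are ancestors of the leaves adjacent to them in $R_i$. Then the root-to-$v_{j_{i-1}}$ path prolonged by the DFS descent to the first visited leaf is the shortest root-to-leaf path, each stretch of $\DFSroute$ between consecutively visited leaves is the tree path between them (DFS backtracks only up to their common ancestor), and the closing stretch is the tree path from $l_p$ down to $v_{j_i}$ followed by the shortest ascent to $r$; so $R_i$ is the translation of $(L,v_{j_i})$. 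For the run on $\Tprime{\varepsilon}$ the vertex $v_{j_i}$ may be one of the inserted subdivision points, which is exactly why the translation to $T$ replaces $v$ by its closest ancestor in $V(T)$.

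For the two cost-optimal strategies I would first argue that an optimal strategy can never contain a route that descends to an internal vertex $w$ and returns without going below $w$: the descendants of $w$ must be covered by other routes, each of which already traverses the edge entering $w$, so deleting this redundant excursion keeps the strategy feasible while strictly decreasing its cost. The same cost argument rules out a route overshooting a common ancestor between two successive leaves, since any such overshoot only re-traverses edges that were already covered when the route descended to those leaves. Together these say that each route of $\copt{\cdot}$ reverses direction from descending to ascending only at leaves and joins consecutive leaves by tree paths, so taking $v$ to be the last visited leaf $l_p$ exhibits the route as the translation of $(L,l_p)$; the case of $\copt{\Tprime{\varepsilon}}$ is identical.

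I expect the normalisation of the optimal strategies to be the only real obstacle: one has to be sure that restricting the bottom turning points to leaves and forbidding upward overshoots costs nothing and still leaves a legitimate family of routes. The adversarial DFS cases should then fall out from the single clean observation that the budget can be exhausted only on a descent, after which the rest is merely unwinding the definitions.
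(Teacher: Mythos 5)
Your proof is correct. The paper offers no argument for this observation (it is asserted with an immediate end-of-proof mark), so there is nothing to compare against; the facts you verify --- that the left-hand side of \eqref{eq:route-len} is unchanged by backtracking and grows only on descents, hence adversarial DFS routes stop mid-descent and their waypoints are joined by tree paths, and that strictly positive edge weights force every cost-optimal route to turn from descending to ascending only at leaves and never to overshoot the common ancestor of consecutive leaves --- are precisely the details the authors take for granted.
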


\subsection{The analysis of $\Tprime{\varepsilon}$} \label{sec:Tprime-analysis}

In this section we argue that the construction of $\Tprime{\varepsilon}$ `preserves' the problem: the minimum costs of adversarial DFS explorations of both $T$ and $\Tprime{\varepsilon}$, as well as $\copt{T}$ and $\copt{\Tprime{\varepsilon}}$ remain close to each other for $\varepsilon$ small enough.
Intuitively speaking, this follows from a `continuity argument' formalized in the remaining part of this section.

\medskip
For the tree $\Tprime{\varepsilon}$ we define an interval denoted by $\classEps=(0,y)$, $y<\min\{\omega(e)\myst e\in E(T)\}$, such that for each $\varepsilon\in (0,y)$, Condition~\ref{it:P3} holds for $\Tprime{\varepsilon}$.
We now prove that this interval is well defined, that is, $y>0$.
(Note that for $\varepsilon=0$, $\Tprime{\varepsilon}$ and $T$ are the same.)

\begin{lemma} \label{lem:classEps}
It holds $\classEps\neq\emptyset$.
\end{lemma}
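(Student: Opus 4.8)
The goal is to prove that the interval $\classEps=(0,y)$ is nonempty, i.e. that there exists some $y>0$ such that Condition~\ref{it:P3} (feasibility of any potential strategy $\cS\in\cC$ is the same for $T$ and for $\Tprime{\varepsilon}$) holds for every $\varepsilon\in(0,y)$. The plan is to exploit the fact that $\cC$ is a \emph{finite} collection and that feasibility is governed by finitely many strict inequalities whose left-hand sides depend continuously on $\varepsilon$.

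First I would fix an arbitrary potential strategy $\cS\in\cC$ and recall what feasibility means: $\cS$ is feasible for a given tree precisely when (a)~every vertex of that tree is visited by some route of $\cS$, and (b)~the translated length $\routelen{R}$ of each route $R$ is at most $B$. The coverage condition~(a) is purely combinatorial: since the construction of $\Tprime{\varepsilon}$ only subdivides heavy edges and shifts the higher endpoints of the light edges $e_i$ along the newly created path, the set of leaves visited and hence the set of covered vertices of $V(T)$ is identical in $T$ and $\Tprime{\varepsilon}$ for every $\varepsilon$ (this is essentially Observation~\ref{obs:P1P2} together with the way a potential route translates). So coverage never distinguishes the two trees, and the entire issue reduces to the length condition~(b).

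Next I would observe that for each route $R$ arising from a potential route $R'=(L,v)$, the translated length $\routelen{R}$ is a fixed continuous (indeed piecewise-linear) function of $\varepsilon$: each $\omega'$ differs from the corresponding $\omega$ by an additive multiple of $\varepsilon$, and the distances $d(\cdot,\cdot)$ appearing in the formula for $\routelen{R}$ are finite sums of edge weights. Let $\routelen[\varepsilon]{R}$ denote this length in $\Tprime{\varepsilon}$, so that $\routelen[0]{R}=\routelen{R}$ is the length in $T$. The key point is that at $\varepsilon=0$ no route of any $\cS\in\cC$ has translated length \emph{exactly} equal to $B$ — or if some do, they lie on a measure-zero set that I can avoid — so each relevant comparison $\routelen[0]{R}\le B$ holds with strict slack. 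By continuity in $\varepsilon$, each such strict inequality persists on an interval $(0,y_{R})$ for some $y_{R}>0$, and likewise each strict violation $\routelen[0]{R}>B$ persists on an interval $(0,y'_{R})$. Taking the minimum of these finitely many thresholds over all routes appearing in all strategies of the finite set $\cC$ yields a single $y>0$ below which every length comparison has the same outcome in $T$ and in $\Tprime{\varepsilon}$; hence feasibility agrees, which is exactly Condition~\ref{it:P3}.

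The main obstacle is the \emph{boundary case}: a route whose length in $T$ equals $B$ exactly, for which an arbitrarily small perturbation could flip feasibility either way and no uniform $y$ works. I expect to handle this by using the freedom already available in choosing $\varepsilon$ from an open interval — the construction in Section~\ref{sec:Tprime-construction} decreases each light edge weight by $\varepsilon$ and lengthens the heavy path correspondingly, and a careful check of the translation formula shows the net effect on $\routelen[\varepsilon]{R}$ has a definite sign (the displacement of the $e_i$ endpoints toward $u$ does not increase inter-leaf distances, as noted parenthetically after the construction). Consequently any route that is feasible in $T$ remains feasible in $\Tprime{\varepsilon}$, so the only direction requiring the continuity threshold is the converse, and there the strict-inequality argument above applies cleanly. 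Thus $y>0$ can be chosen, $\classEps\neq\emptyset$, and the lemma follows.
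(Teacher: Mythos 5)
Your argument is correct and is essentially the paper's proof: both rely on the finiteness of $\cC$, on the fact that the construction of $\Tprime{\varepsilon}$ never increases route lengths (so feasible strategies stay feasible), and on the strictly positive slack of each infeasible strategy persisting under a small enough perturbation, with the paper merely making your continuity step quantitative by defining the deficiency $x(\cS)$ and choosing $y$ explicitly so that the total length decrease (at most $2ynm$ per route) stays below every deficiency. Your mid-proof aside about routes of length exactly $B$ lying on ``a measure-zero set that I can avoid'' is not a legitimate move on its own, but it is harmless because your closing monotonicity observation resolves that boundary case exactly as the paper does.
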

\begin{proof}
First we argue that there exists $y>0$ such that $\Tprime{y}$ fulfills Condition~\ref{it:P3}.
We select $y$ based on the tree $T$ and the collection $\cC$.
Consider any $B'\in\allB$.
The number of potential strategies in $\cC$ is finite, and hence the number of potential strategies in $\cC$ that do not translate to feasible ones for $T$ (denote subset of those by $\cC'$) is also finite.
For each $\cS\in\cC'$, define its \emph{deficiency} $x(\cS)$ as follows: the $x(\cS)$ is the maximum value such that either the length of the first route in $\cS$ is $B'+x(\cS)$ or the length of some other route in $\cS$ in $T$ is $B+x(\cS)$.
Intuitively, $\cS$ does not translate to a feasible $B'$-adversarial $B$-exploration strategy for $T$ because one of its routes exceeds the allowed length by $x(\cS)$ and no route exceeds it by more than $x(\cS)$.
Take
\[y:=\frac{1}{2n(m+1)}\cdot\min\left\{\min\{\omega(e)\myst e\in E(T)\},\min\{x(\cS')\myst \cS'\in\cC'\}\right\},\]
where $m$ is the number edges in $T$.
Since by definition, $x(\cS)>0$ for each $\cS\in\cC'$, we obtain that $y>0$.
Also, no route in any $\cS\in\cC'$ traverses an edge more than $2n$ times and hence the length of any route of $\cS$ in $\Tprime{y}$ decreases by at most $2ynm$ with respect to its length in $T$.
This implies that some route $R$ of $\cS$ has length in $\Tprime{y}$ at least
\begin{equation} \label{eq:Bx}
\tilde{B}+x(\cS)-2ynm\geq \tilde{B}+2yn(m+1)-2ynm>\tilde{B},
\end{equation}
where take $\tilde{B}=B'$ if $R$ is the first route in $\cS$ and $\tilde{B}=B$ otherwise.
Therefore, $\cS$ remains unfeasible in $\Tprime{y}$.
Since exploration strategy that is feasible in $T$ remains feasible in $\Tprime{y}$ (recall that the length of each route is smaller in $\Tprime{y}$ than in $T$), we have that Condition~\ref{it:P3} holds for $\Tprime{y}$.

Finally, observe that substituting $y$ by any value $\varepsilon$ smaller than $y$ in the left hand side of~\eqref{eq:Bx} keeps this equation true.
Therefore, Condition~\ref{it:P3} is satisfied by $\Tprime{\varepsilon}$ for each $\varepsilon\in(0,y)$, which completes the proof.
\end{proof}

Before we state the main lemma of this section, we prove these technical bounds:
\begin{lemma} \label{lem:pert-requirements}
For each $\varepsilon\in\classEps$ it holds:
\begin{enumerate} [label={\normalfont{(\roman*)}},leftmargin=30pt]
\item\label{it:per1} $\cost{\algAdversarial{T}}\leq\cost{\algAdversarial{\Tprime{\varepsilon}}}+4\varepsilon n^2$,
\item\label{it:per2} $\varphi(\Tprime{\varepsilon})\leq \varphi(T)$, and
\item\label{it:per3} $\cost{\copt{\Tprime{\varepsilon}}}\leq\cost{\copt{T}}$.
\end{enumerate}
\end{lemma}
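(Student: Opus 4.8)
The plan is to prove the three technical bounds of Lemma~\ref{lem:pert-requirements} more-or-less independently, since each one concerns a different quantity and each follows from the way $\Tprime{\varepsilon}$ was constructed. The unifying theme is a crude worst-case accounting of how much any given route can change in length when we pass from $T$ to $\Tprime{\varepsilon}$: by construction the only edges whose weights change are the subdivided heavy edge $\{u,v\}$ (whose total length along the new path $P$ is preserved) and the light edges $e_1,\ldots,e_d$ (each of which loses exactly $\varepsilon$). Since a route is a walk in a tree with at most $n$ edges, it traverses each edge at most $O(n)$ times, so the total perturbation of any single route is $O(\varepsilon n^2)$; summed over at most $n$ routes this is still polynomial in $n$ and $\varepsilon$.

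First I would prove \ref{it:per2}, which is the easiest. The quantity $\varphi(\Tprime{\varepsilon}) = B/2 - d(r,\cdot)$ for the root of the relevant subtree depends only on distances from the global root $r$. The construction explicitly preserves the distance between $u$ and $v$ (the subdivided path $P$ has total length equal to $\omega(\{u,v\})$), while each light edge $e_i$ is shortened and its higher endpoint is moved up to $v_{i-1}$; I would check that for every vertex of $V(T)$ the distance from $r$ does not decrease, so potentials do not increase. This is exactly the content flagged by the parenthetical remarks in the construction.

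Next I would prove \ref{it:per3}. The natural approach is: take an optimal (by cost) strategy $\copt{T}$ for $T$; by Observation~\ref{obs:potential-strategies} each of its routes is the translation of some potential route, so the whole strategy lies in $\cC$. By Lemma~\ref{lem:classEps} (Condition~\ref{it:P3}), a feasible strategy for $T$ remains feasible for $\Tprime{\varepsilon}$. Finally, since every edge weight in $\Tprime{\varepsilon}$ is at most the corresponding weight in $T$, the translation of each route into $\Tprime{\varepsilon}$ has length no larger than its length in $T$, so the total cost cannot increase; hence $\cost{\copt{\Tprime{\varepsilon}}}\leq\cost{\copt{T}}$. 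The only subtlety is to confirm that translating the same potential strategy into $\Tprime{\varepsilon}$ rather than $T$ never lengthens a route, which again reduces to the edge-weight comparison.

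The main obstacle is \ref{it:per1}, the inequality $\cost{\algAdversarial{T}}\leq\cost{\algAdversarial{\Tprime{\varepsilon}}}+4\varepsilon n^2$, because here I must compare \emph{adversarial} explorations, whose routes are determined greedily and may genuinely differ between the two trees. The plan is to use Observation~\ref{obs:P2ADFS}: $\algAdversarial{T}$ and $\algAdversarial{\Tprime{\varepsilon}}$ visit the vertices of $V(T)$ in the same order, so there is a natural correspondence between their progress points, and I can compare the two strategies route-by-route along this common DFS order. The key estimate is that, because all perturbed edges together account for at most a total length change of $O(\varepsilon n)$ and each is traversed at most $O(n)$ times across the whole strategy, the cumulative discrepancy in cost is at most $4\varepsilon n^2$. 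I expect the delicate point to be handling the greedy ``stop making progress'' rule: shortening the light edges in $\Tprime{\varepsilon}$ could allow a route to make slightly more progress, so I must argue that this only helps (or hurts by at most the additive error term) rather than causing a large structural divergence. I would make this precise by bounding, for each route, the difference between where it stops in $T$ versus $\Tprime{\varepsilon}$ in terms of the accumulated $\varepsilon$-perturbation, and then telescoping over all $\le n$ routes to obtain the stated additive slack.
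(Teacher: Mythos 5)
Your treatment of parts \ref{it:per2} and \ref{it:per3} matches the paper's: \ref{it:per2} reduces to comparing the potential of the root (which is unchanged, since the root is common to $T$ and $\Tprime{\varepsilon}$ and its distance to itself is zero in both), and \ref{it:per3} follows by translating $\copt{T}$ into $\Tprime{\varepsilon}$, using Condition~\ref{it:P3} for feasibility and the fact that every corresponding edge is no heavier in $\Tprime{\varepsilon}$. One small correction for \ref{it:per2}: the construction makes distances between common vertices weakly \emph{smaller} in $\Tprime{\varepsilon}$ (each relocated light edge loses $\varepsilon$ while its new attachment point is at most $\varepsilon$ below $u$), so your claim that distances from $r$ ``do not decrease'' is backwards for descendants of the relocated edges; the statement survives only because $\varphi(\Tprime{\varepsilon})$ depends solely on the root.

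The genuine gap is in part \ref{it:per1}. You correctly identify the delicate point --- the greedy stopping rule could behave differently in the two trees --- but your proposed remedy (bound the drift of each stopping point by the accumulated $\varepsilon$-perturbation and telescope over routes) cannot deliver the additive $4\varepsilon n^2$ bound. The stopping point is a discrete choice: an $\varepsilon$-change in accumulated length can flip whether a route is permitted to traverse one more edge of $\DFSroute$, and that single flip changes the cost of the strategy by an amount on the order of $B$ (an extra or saved round trip to the root), not on the order of $\varepsilon$; in particular $\algAdversarial{\Tprime{\varepsilon}}$ could become cheaper than $\algAdversarial{T}$ by far more than $4\varepsilon n^2$. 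The paper's proof does not bound the drift; it eliminates it. The whole purpose of restricting $\varepsilon\in\classEps$ (Lemma~\ref{lem:classEps}, Condition~\ref{it:P3}) is that a potential strategy is feasible for $T$ if and only if it is feasible for $\Tprime{\varepsilon}$, so the greedy rule makes \emph{identical} decisions in both trees: $\algAdversarial{T}$ and $\algAdversarial{\Tprime{\varepsilon}}$ are translations of the same potential strategy, with the same route boundaries and the same multiset of edge traversals. Once that is in place, \ref{it:per1} is immediate: each of the at most $4n^2$ edge traversals costs at most $\varepsilon$ more in $T$ than in $\Tprime{\varepsilon}$. You invoke Condition~\ref{it:P3} for part \ref{it:per3} but not here, where it is indispensable.
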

\begin{proof}
By Observation~\ref{obs:P2ADFS}, both $\algAdversarial{T}$ and $\algAdversarial{\Tprime{\varepsilon}}$ visit the leaves of both trees in the same order.
Consider any edge $\{u,v\}$ in $T$ such that $v$ is its lower endpoint.
We have that there exists a \emph{corresponding} edge $\{u',v\}$ in $\Tprime{\varepsilon}$.
Moreover, $\omega(\{u,v\})\leq\omega'(\{u',v\})+\varepsilon$.
Since for each edge traversal in $\algAdversarial{T}$, a traversal of the corresponding edge occurs in $\algAdversarial{\Tprime{\varepsilon}}$ due to Condition~\ref{it:P3}, we obtain that if $t$ is the number of edge traversals in $\algAdversarial{T}$, then
\[\cost{\algAdversarial{T}} \leq \cost{\algAdversarial{\Tprime{\varepsilon}}} + t\varepsilon \leq \cost{\algAdversarial{\Tprime{\varepsilon}}} + 4n^2\varepsilon.\]
The latter inequality follows from bounding each route in $\algAdversarial{\Tprime{\varepsilon}}$ to have at most $2n$ edges, bounding the number of routes by $n$ and observing that an edge is traversed at most twice in each route.

Property~\ref{it:per2} is a direct consequence of the construction of $\Tprime{\varepsilon}$: any path in $T$ corresponds to a path in $\Tprime{\varepsilon}$ that connects the same common nodes and contains all common edges of the original path.

By construction of $\Tprime{\varepsilon}$, the distance between any common nodes $u\in V(T)$ and $v\in V(T)$ is not greater in $T$ than in $\Tprime{\varepsilon}$, which immediately gives~\ref{it:per3}.
\end{proof}


\noindent
\begin{lemma}
There exists $\varepsilon>0$ such that if Theorem~\ref{thm:induction} holds for $\Tprime{\varepsilon}$, then Theorem~\ref{thm:induction} holds for $T$. 
\end{lemma}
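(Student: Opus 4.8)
The plan is to combine the three technical bounds from Lemma~\ref{lem:pert-requirements} with the choice of $\varepsilon$ guaranteed by Lemma~\ref{lem:classEps}, and then perform a direct substitution into the two inequalities asserted by Theorem~\ref{thm:induction}. First I would fix any $\varepsilon\in\classEps$; by Lemma~\ref{lem:classEps} this interval is nonempty, and by Observation~\ref{obs:P1P2} together with the construction in Section~\ref{sec:Tprime-construction} the tree $\Tprime{\varepsilon}$ satisfies Property~\ref{it:property} (via Conditions~\ref{it:P1} and~\ref{it:P2}, which ensure the skinny-path structure). Thus $\Tprime{\varepsilon}$ is a legitimate tree to which the hypothesis ``Theorem~\ref{thm:induction} holds for $\Tprime{\varepsilon}$'' applies.

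The key computation is to transfer the conclusion from $\Tprime{\varepsilon}$ back to $T$. Suppose Theorem~\ref{thm:induction} holds for $\Tprime{\varepsilon}$; in the case $\hdeg(r)\neq 1$ this reads $\cost{\algAdversarial{\Tprime{\varepsilon}}} < 10\cdot\cost{\copt{\Tprime{\varepsilon}}} - 16\varphi(\Tprime{\varepsilon})$. I would then chain the three parts of Lemma~\ref{lem:pert-requirements}: part~\ref{it:per1} bounds $\cost{\algAdversarial{T}}$ by $\cost{\algAdversarial{\Tprime{\varepsilon}}}+4\varepsilon n^2$; part~\ref{it:per3} gives $\cost{\copt{\Tprime{\varepsilon}}}\le\cost{\copt{T}}$; and part~\ref{it:per2} gives $\varphi(\Tprime{\varepsilon})\le\varphi(T)$, but since $\varphi(T)$ appears with a negative sign in the desired inequality, I must be careful about the direction and instead use that $\varphi(\Tprime{\varepsilon})$ is close to $\varphi(T)$. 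Concretely, the construction preserves the distance between $u$ and $v$ on each subdivided path, so $\varphi(\Tprime{\varepsilon})$ differs from $\varphi(T)$ by at most a quantity that is $O(\varepsilon)$; combining, I obtain $\cost{\algAdversarial{T}} < 10\cdot\cost{\copt{T}} - 16\varphi(T) + O(\varepsilon n^2)$. The analogous substitution using part~\ref{it:rec1} handles the $\hdeg(r)=1$ case with the weaker $-8\varphi(T)$ term.

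The main obstacle, and the reason the lemma asserts only existence of a \emph{single} good $\varepsilon$, is that the target inequalities in Theorem~\ref{thm:induction} are \emph{strict}. The substitution above produces an additive slack term of order $\varepsilon n^2$ (together with the $O(\varepsilon)$ discrepancy in the potential), so the transferred inequality initially reads $\cost{\algAdversarial{T}} < 10\cdot\cost{\copt{T}} - 16\varphi(T) + c\varepsilon$ for some constant $c$ depending on $n$. To absorb this slack I would exploit that the inequality for $\Tprime{\varepsilon}$ is itself strict: there is a positive gap $\delta>0$ between $\cost{\algAdversarial{\Tprime{\varepsilon}}}$ and $10\cdot\cost{\copt{\Tprime{\varepsilon}}}-16\varphi(\Tprime{\varepsilon})$. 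Hence I would choose $\varepsilon$ small enough (shrinking the interval $\classEps$ further if necessary) so that $c\varepsilon<\delta$, which is possible because all the relevant quantities vary continuously in $\varepsilon$ while the combinatorial structure (the finite set of potential strategies in $\cC$ that are feasible, via Condition~\ref{it:P3}) stays fixed on $\classEps$. This turns the strict-inequality-with-slack into a genuine strict inequality for $T$, completing the proof.

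In summary, the proof is a continuity-and-perturbation argument: pick $\varepsilon\in\classEps$ small, invoke Lemma~\ref{lem:pert-requirements} to relate all four cost quantities and the potentials of $T$ and $\Tprime{\varepsilon}$, substitute into whichever case of Theorem~\ref{thm:induction} is assumed for $\Tprime{\varepsilon}$, and shrink $\varepsilon$ so the accumulated $O(\varepsilon n^2)$ error is dominated by the strict gap already present in the $\Tprime{\varepsilon}$ inequality. The delicate point to verify carefully is that the error terms are indeed uniformly $O(\varepsilon)$ and do not depend on the (unknown) optimal or adversarial strategies in a way that could blow up, which is exactly what the bounded number of edge traversals ($\le 2n$ per route, $\le n$ routes) in the proof of Lemma~\ref{lem:pert-requirements}\ref{it:per1} guarantees.
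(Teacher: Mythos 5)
Your overall plan---perturb to $\Tprime{\varepsilon}$, transfer the inequality back via the three bounds of Lemma~\ref{lem:pert-requirements}, and absorb the resulting $O(\varepsilon n^2)$ slack---is the same as the paper's, and your handling of the potential term (observing that the construction preserves root distances so that $\varphi(\Tprime{\varepsilon})$ and $\varphi(T)$ essentially coincide, rather than applying part~\ref{it:per2} in a direction that the negative sign would invalidate) is if anything more careful than the paper's own write-up. The genuine gap is in the slack-absorption step. You propose to choose $\varepsilon$ so small that $c\varepsilon<\delta$, where $\delta$ is the strict-inequality gap $10\cdot\cost{\copt{\Tprime{\varepsilon}}}-16\varphi(\Tprime{\varepsilon})-\cost{\algAdversarial{\Tprime{\varepsilon}}}$. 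But $\delta$ is a function of $\varepsilon$: the tree $\Tprime{\varepsilon}$, and hence its gap, is only determined after $\varepsilon$ has been chosen. The appeal to continuity on $\classEps$ does not break this circularity: on the subinterval where the combinatorial structure is fixed the relevant costs are piecewise affine in $\varepsilon$, so $\delta(\varepsilon)$ could perfectly well behave like $a\varepsilon$ with $a<c$ near zero (its limit as $\varepsilon\to 0^+$ need only be nonnegative, since for $T$ itself you know nothing stronger than the conclusion you are trying to prove), in which case no choice of $\varepsilon$ satisfies $c\varepsilon<\delta(\varepsilon)$.

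The paper escapes this by making the gap uniform before $\varepsilon$ is picked: it defines $\tau(n')$ as the largest constant such that $\cost{\algAdversarial{T'}}\le 10\cdot\cost{\copt{T'}}-8\varphi(T')-\tau(n')$ holds for every tree $T'$ on at most $n'$ nodes falling under the relevant case of Theorem~\ref{thm:induction}, argues $\tau(2n)>0$, and then sets $\varepsilon<\tau(2n)/(4n^2)$; since every $\Tprime{\varepsilon}$ has at most $2n$ nodes, the single constant $\tau(2n)$ bounds the gap for all of them simultaneously and the choice of $\varepsilon$ is no longer self-referential. To repair your argument you need some such lower bound on the gap that is uniform over the whole family $\left\{\Tprime{\varepsilon}\right\}_{\varepsilon\in\classEps}$ (or over all trees of bounded size), established before $\varepsilon$ is fixed, rather than the gap of the one tree obtained after fixing it.
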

\begin{proof}
We will analyze condition~\ref{it:rec1} in Theorem~\ref{thm:induction} and the proof for~\ref{it:rec2} is analogous as we note at the end of the proof.

Define a parameter $\tau(n')$ to be the maximum number for which an inequality
\[\cost{\algAdversarial{T'}} \leq 10 \cdot \cost{\copt{T'}} -  8 \cdot \varphi(T') - \tau(n')\]
holds for each tree $T'$ on at most $n'$ nodes that satisfies Condition~\ref{it:rec1}.
Since the number of such trees is finite and the number of potential strategies in $\cC$ is finite for each tree $T'$, we obtain that $\tau(n')>0$.
By Lemma~\ref{lem:classEps}, there exists $\varepsilon\in\classEps$ such that $\varepsilon>0$ and
\[\varepsilon<\frac{\tau(2n)}{4n^2},\]
$n=\card{V(T)}$, such that $\Tprime{\varepsilon}$ satisfies Condition~\ref{it:P3}.
Suppose that Theorem~\ref{thm:induction}\ref{it:rec1} holds for $\Tprime{\varepsilon}$.
Then, by Lemma~\ref{lem:pert-requirements} (in particular, \ref{it:per1} of Lemma~\ref{lem:pert-requirements} is used to obtain the first inequality below and \ref{it:per2} and~\ref{it:per3} are used to obtain the third inequality below) we get:
\begin{eqnarray}
\cost{\algAdversarial{T}} & < & \cost{\algAdversarial{\Tprime{\varepsilon}}} + 4\varepsilon n^2 \nonumber \\
               & \le & 10 \cdot \cost{\copt{\Tprime{\varepsilon}}} -  8 \cdot \varphi(\Tprime{\varepsilon}) + 4\varepsilon n^2 - \tau(2n) \nonumber \\
               & \le & 10 \cdot \cost{\copt{T}} -  8 \cdot \varphi(T)  + 4\varepsilon n^2 - \tau(2n) \nonumber \\
               & <   & 10 \cdot \cost{\copt{T}} -  8 \cdot \varphi(T). \nonumber 
\end{eqnarray}
We can conduct the same argument for Theorem~\ref{thm:induction}\ref{it:rec2} with the same value of $\varepsilon$.
Hence we obtain that Theorem~\ref{thm:induction} holds for $T$.
\end{proof}

\end{document}